\let\doendproof\endproof
\renewcommand\endproof{~\hfill$\qed$\doendproof}
\Crefname{observation}{Observation}{Observations}
\Crefname{algorithm}{Algorithm}{Algorithms}
\Crefname{section}{Sect.}{Sects.}
\Crefname{observation}{Observation}{Observations}
\Crefname{lemma}{Lemma}{Lemmas}
\Crefname{lemma2}{Lemma}{Lemmas}
\Crefname{theorem2}{Theorem}{Theorems}
\Crefname{claim}{Claim}{Claims}
\Crefname{claimx}{Claim}{Claims}
\Crefname{figure}{Fig.}{Figs.}
\Crefname{enumi}{Condition}{Conditions}
\Crefname{property}{Property}{Properties}
\Crefname{assumption}{Assumption}{Assumptions}
\theoremstyle{definition}
\newtheorem{defn}{Definition}
\newtheorem{clm}{Claim}
\newcommand{\qedclaim}{\hfill $\blacksquare$}
\newenvironment{claimproof}{\noindent{\itshape Proof.}}{~\hfill \qedclaim \smallskip}
\begin{document}

%
\title{
    Small Point-Sets Supporting Graph Stories\thanks{This work was partially supported by: $(i)$ MIUR, grant 20174LF3T8 ``AHeAD: efficient Algorithms for HArnessing networked Data''; $(ii)$ Dipartimento di Ingegneria - Universit\`a degli Studi di Perugia, grants RICBA20EDG: ``Algoritmi e modelli per la rappresentazione visuale di reti'' and RICBA21LG: ``Algoritmi, modelli e sistemi per la rappresentazione visuale di reti''.}
}
%
%
 \author{
     Giuseppe Di~Battista\inst{1}\texorpdfstring{\href{https://orcid.org/0000-0003-4224-1550}{\protect\includegraphics[scale=0.45]{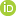}}}{} \and
     Walter Didimo\inst{2}\texorpdfstring{\href{https://orcid.org/0000-0002-4379-6059}{\protect\includegraphics[scale=0.45]{orcid}}}{} \and
     Luca Grilli\inst{2}\texorpdfstring{\href{https://orcid.org/0000-0002-2463-3772}{\protect\includegraphics[scale=0.45]{orcid}}}{} \and
     Fabrizio Grosso\inst{1}\texorpdfstring{\href{https://orcid.org/0000-0002-5766-4567}{\protect\includegraphics[scale=0.45]{orcid}}}{} \and
     Giacomo Ortali\inst{2}\texorpdfstring{\href{https://orcid.org/0000-0002-4481-698X}{\protect\includegraphics[scale=0.45]{orcid}}}{} \and
     Maurizio Patrignani\inst{1}\texorpdfstring{\href{https://orcid.org/0000-0001-9806-7411}{\protect\includegraphics[scale=0.45]{orcid}}}{} \and
     Alessandra Tappini\inst{2}\texorpdfstring{\href{https://orcid.org/0000-0001-9192-2067}{\protect\includegraphics[scale=0.45]{orcid}}}{}
 }

\authorrunning{G. Di~Battista et al.}
%
 \institute{
     Roma Tre University, Rome, Italy
     \email{\{giuseppe.dibattista,fabrizio.grosso,maurizio.patrignani\}@uniroma3.it} \and
     University of Perugia, Perugia, Italy
     \email{\{walter.didimo,luca.grilli,giacomo.ortali,alessandra.tappini\}@unipg.it}
 }
\maketitle              
%
\begin{abstract}
In a graph story the vertices enter a graph one at a time and each vertex persists in the graph for a fixed amount of time $\omega$, called viewing window. At any time, the user can see only the drawing of the graph induced by the vertices in the viewing window and this determines a sequence of drawings.
For readability, we require that all the drawings of the sequence are planar. For preserving the user's mental map we require that when a vertex or an edge is drawn, it has the same drawing for its entire life. We study the problem of drawing the entire sequence by mapping the vertices only to $\omega+k$ given points, where $k$ is as small as possible.
We show that:
$(i)$ The problem does not depend on the specific set of points but only on its size;
$(ii)$ the problem is \NP-hard and is FPT when parameterized by $\omega+k$;
%
$(iii)$ there are families of graph stories that can be drawn with $k=0$ for any $\omega$, while for $k=0$ and small values~of~$\omega$ there are families of graph stories that can be drawn and others that cannot;
$(iv)$ there are families of graph stories that cannot be drawn for any fixed $k$ and families of graph stories that require at least a certain~$k$.


\keywords{Dynamic Graphs \and Planar Graphs \and Time and Graph Drawing}
\end{abstract}

\section{Introduction}\label{se:intro}

In this paper we address ``graph stories'', a model introduced by Borrazzo et al.\ in~\cite{bddfp-gsisa-jgaa20} as a framework for exploring temporal data. In a \emph{graph story} the vertices enter a graph one at a time and persist in the graph for a fixed amount of time~$\omega$, called the \emph{size of the viewing window}.
At any time, the user can see only the drawing of the graph induced by the vertices in the viewing window and this determines a sequence of drawings.
For readability, all the drawings of the sequence are required to be planar. For preserving the user's mental map, when an edge is drawn it has the same drawing for its entire life. 
Also, in order to limit the constraints, we allow the edges to be represented by Jordan arcs.


Graph stories are related to a rich body of literature devoted to the visualization of dynamic graphs (surveys can be found in~\cite{DBLP:journals/cgf/BeckBDW17,DBLP:journals/im/Michail16}). One of the main classification criteria of dynamic graph problems is whether the story is entirely known in advance ({\em off-line model}) or not ({\em on-line model}). In this respect, our contribution falls in the off-line model. A third intermediate category ({\em look-ahead model}) is when a small chunk of the incoming events is known in advance to the drawing algorithms. The events are also a classification criterion, as they may refer to vertices, edges, or both. Finally, further constraints may regard the timings of the events, the more common being that they occur one at a time at regular intervals and that the incoming objects have a fixed lifetime as in the case of graph stories. In some cases, the order of the events is constrained to correspond to a specific kind of visit of the graph.

Several results focus on dynamic trees. In~\cite{DBLP:journals/ipl/BinucciBBDGPPSZ12}, it is shown how to draw a tree in $O(\omega^3)$ area where the model is on-line, the incoming objects are edges that arrive in the order of a Eulerian tour of the tree and whose straight-line drawing persists for a fixed lifetime~$\omega$. 
In~\cite{DBLP:conf/gd/DemetrescuBFLPP99}, a small look ahead on the sequence of vertices is used in order to add one vertex at a time to the current drawing of an infinite tree, balancing the readability of the drawings with respect to the difference between consecutive drawings. 
In~\cite{DBLP:conf/gd/SkambathT16}, a sequence of trees (their union, though, may be an arbitrary graph) is completely known in advance. Vertices and edges can move during the animation and can have arbitrary lifetime. The purpose is to pursue aesthetic criteria commonly adopted for tree drawings~\cite{DBLP:journals/tse/ReingoldT81}.

Only a few results regard more complex families of graphs.  For instance, in~\cite{DBLP:conf/gd/GoodrichP13a}, a stream of edges enter the drawing and never leave it, forming an outerplanar graph that has to be drawn according to an on-line model, moving the previously drawn vertices by a polylogarithmic distance.
In~\cite{DBLP:journals/siamcomp/CohenBTT95} 
the drawings of several families of graphs are updated as vertices and edges enter and leave the current graph according to the on-line model.

More feebly related to our setting is the literature about dynamic planarity \cite{%
DBLP:journals/siamcomp/BattistaT96,%
DBLP:journals/iandc/BattistaTV01,%
DBLP:reference/algo/Italiano16e,%
DBLP:conf/stoc/Poutre94,%
DBLP:conf/gd/RextinH08%
}, where the model is on-line and the planar embedding of the graph is allowed to change. When the embedding has to be preserved, instead, planarly adding a stream of edges with a fixed lifetime is \NP-complete even~for~the off-line model~\cite{DBLP:journals/tcs/LozzoR19}.
%
Also, related somehow to dynamic graph drawing is geometric simultaneous embedding~\cite{DBLP:reference/crc/BlasiusKR13,DBLP:journals/comgeo/BrassCDEEIKLM07}, which can be used to model temporal graphs. 

Coming more properly to the graph story model, Borrazzo et al.~\cite{bddfp-gsisa-jgaa20} address the setting where all the drawings of the story are straight-line and planar, and where vertices do not change their position once drawn. 
It is shown that graph stories of paths and trees can be drawn on a $2\omega \times 2\omega$ and on an $(8\omega+1) \times (8\omega+1)$ grid, respectively. Further, there exist graph stories of planar graphs that cannot be drawn straight-line within an area that is only a function of~$\omega$.

\smallskip
\noindent \textbf{Contribution.} We study the problem of drawing a graph story by mapping the vertices only to $\omega+k$ given points, where $k$ is as small as possible. We call this a \emph{realizability problem}.
%
Our contribution is as follows.
In Section~\ref{se:graph-stories} we show that the realizability of graph stories is a topological problem, as it does not depend on the specific set of points but only on its size. We also give a characterization of realizable graph stories based on the concept of ``compatible embeddings''.
In Section~\ref{se:realizability-testing} the realizability problem is proven to be \NP-complete, even for any given constant $k$, and to belong to \FPT~when parameterized by the size $\omega+k$ of the point set.
In Section~\ref{se:minimal} we study the realizability of graph stories with $k=0$, which we call \emph{minimal}. In particular, we show that: 
$(i)$~Every minimal graph story of an outerplanar graph is realizable;
$(ii)$~for every $\omega \geq 5$ there exist minimal graph stories of series-parallel graphs that are not realizable; 
$(iii)$~all minimal graph stories with $\omega \leq 5$ whose graph does not contain $K_5$ are realizable if we are allowed to redraw at most one edge at each vertex arrival; 
and $(iv)$~minimal graph stories with $\omega \leq 5$ are always realizable for planar triconnected cubic graphs.
%
Finally, in Section \ref{se:lower-bounds} we show that there are families of graph stories that are not realizable for any fixed $k$ and families of graph stories that, to be realizable, require at least a certain value for $k$.

Some proofs have been sketched or omitted and can be found in the appendix.

\medskip\noindent\textbf{Preliminaries.}
A \emph{drawing} $\Gamma$ of a graph $G=(V,E)$ maps each vertex of $V$ to a distinct point of the plane and each edge of $E$ to a Jordan arc connecting its end-vertices;
$\Gamma$ is \emph{planar} if no two edges intersect except at common endpoints.
%
A planar drawing $\Gamma$ of $G$ subdivides the plane into connected regions called \emph{faces}, and the set of circular orders of the edges incident to each vertex is called a \emph{rotation system}.
The unbounded face of $\Gamma$ is the~\emph{external~face}.
Walking on the (not necessarily connected) border of a face~$f$ of $\Gamma$ so to keep~$f$ to the left  determines a set, called the \emph{boundary of~$f$}, of circular lists of alternating vertices and edges. Each list describes a (not necessarily simple) cycle, which can also consist of an isolated vertex: 
Each edge of $G$ occurs either once in exactly two circular lists of different face boundaries or twice in the circular list of one face~boundary.

Two planar drawings of $G$ are \emph{equivalent} if they have the same rotation system, face boundaries, and external face.
%
An equivalence class of planar drawings of $G$ is a \emph{planar embedding of $G$}.
Note that, if $G$ is connected then each face boundary consists of exactly one circular list; in this case an embedding of $G$ is fully specified by its rotation system and by its external face.
If~$G$ is equipped with a planar embedding $\phi$, it is a \emph{plane} graph; a planar drawing $\Gamma$ of $G$ is \emph{embedding-preserving}~if $\Gamma \in \phi$.
If $G'$ is a subgraph of $G$ and $\Gamma'$ is the restriction of $\Gamma$ to $G'$, the planar embedding $\phi'$ of $\Gamma'$ is the \emph{restriction of $\phi$~to~$G'$}.

\section{Graph Stories}\label{se:graph-stories}
\begin{defn}\label{df:graph-story}
A \emph{graph story} is a tuple ${\cal S} = (G, \omega, k, \tau)$ where:
\begin{inparaenum}[$(i)$]
    \item $G=(V,E)$ is an $n$-vertex graph;
    \item $\omega \leq n$ is a positive integer,
    called the \emph{size of the viewing window};
    \item $k$ is a non-negative integer, called the \emph{number of extra points}; and
    \item $\tau = \langle v_1, v_2, \dots, v_n \rangle$ is a linear ordering of the vertices of $G$ (i.e., $v_i \in V$ is the vertex at position $i$ according to $\tau$).
\end{inparaenum}
\end{defn}

Let $G_i=(V_i, E_i)$ denote the subgraph of $G$ induced by all vertices $v_j$ such that $ \max\{1,i-\omega+1\} \leq j \leq i$. Observe that, if $i \leq \omega$ then $G_i$ consists of the $i$ vertices $\{v_1, v_2, \dots, v_i\}$; otherwise $G_i$ consists of the $\omega$ vertices $\{v_{i-\omega+1}, v_{i-\omega+2}, \dots, v_i\}$. In other words, $G_i$ is the subgraph induced by $v_i$ and by the (up to) $\omega-1$ vertices of $G$ that precede $v_i$ in $\tau$. For each $i$, we say that $v_i$ \emph{enters} the viewing window at time $i$, and for each $i \in \{\omega+1, \dots, n\}$, we say that $v_{i-\omega}$ \emph{leaves} the viewing window at time $i$.

\begin{defn}\label{df:realization}
A \emph{realization} of a graph story ${\cal S} = (G, \omega, k, \tau)$ \emph{on a set} $P$ of $\omega+k$ points is a sequence of drawings ${\cal R} = \langle \Gamma_1, \Gamma_2, \dots, \Gamma_n \rangle$ with the following two properties:
(\textsf{R1}) $\Gamma_i$ $(1 \leq i \leq n)$ is a planar drawing of $G_i$, where distinct vertices of $V_i$ are mapped to distinct points of $P$;
(\textsf{R2}) the restrictions of $\Gamma_{i-1}$ and of $\Gamma_i$ $(2 \leq i \leq n)$ to their common subgraph $G_{i-1} \cap G_i$ are identical. 

\begin{figure}[tb]
	\centering
	\subfigure[$G$]{\label{fi:story-example-a}\includegraphics[page=1,width=0.193\columnwidth]{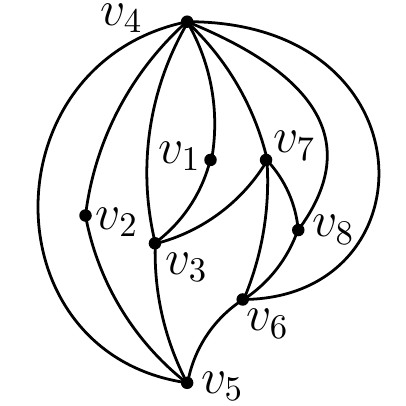}}
	\hfill
	\subfigure[$\Gamma_5$]{\label{fi:story-example-b}\includegraphics[page=2,width=0.193\columnwidth]{story-example}}
	\hfill
	\subfigure[$\Gamma_6$]{\label{fi:story-example-c}\includegraphics[page=3,width=0.193\columnwidth]{story-example}}
	\hfill
	\subfigure[$\Gamma_7$]{\label{fi:story-example-d}\includegraphics[page=4,width=0.193\columnwidth]{story-example}}
	\hfill
	\subfigure[$\Gamma_8$]{\label{fi:story-example-e}\includegraphics[page=5,width=0.193\columnwidth]{story-example}}
	\caption{A realization of a graph story ${\cal S} = (G, 5, 1, \tau)$ on a set $P$ of 6 points. The points of $P$ are yellow disks. For each $\Gamma_i$ ($5 \le i \le 8$), vertex $v_i$ and its incident edges are red.}\label{fi:story-example}
\end{figure}

\medskip \cref{fi:story-example} shows a realization of a graph story ${\cal S} = (G, 5, 1, \tau)$ on a set of 6 points.
A graph story ${\cal S}$ is \emph{realizable} if there exists a set $P$ of $\omega+k$ points~such that~$\cal S$ admits a realization on $P$. Since the planarity of all graphs $G_i$ is necessary for realizability, from now on we consider graph stories that satisfy~this~requirement.
\end{defn}

\begin{remark}[Edge Visibility]\label{re:visibility}
We assume that $G$ only consists of \emph{visible edges}, i.e., edges $(v_i, v_j)$ such that $|i-j| < \omega$. Indeed if $|i-j| \geq \omega$, $(v_i, v_j)$ can be ignored, as it never appears in a realization.   
%
Our assumption has two implications:
$(i)$~$G$ has vertex-degree at most $2\omega-2$ (every~$G_i$ has vertex-degree at most $\omega-1$); and
$(ii)$ $G$ has {\em bandwidth} at most $\omega-1$ and hence {\em pathwidth} at most $\omega-1$ \cite{doi:10.1137/S0097539793258143} (the set of bags of this decomposition is $\{V_1, V_2, \dots, V_n\}$).
\end{remark}

\begin{remark}[Minimality]\label{re:minimality}
Clearly, if a graph story ${\cal S}=(G,\omega,k,\tau)$ is realizable, every other story ${\cal S'}=(G,\omega,k',\tau)$ with $k'>k$ is realizable too. Hence, a natural scenario is when the number of extra points $k$ is zero.
We call such~a~story \emph{minimal} and we denote it as ${\cal S} = (G,\omega,\tau)$.
For a minimal graph~story, Property~\textsf{R2} of \cref{df:realization} implies that each vertex $v_i$ with $\omega+1 \leq i \leq n$ is mapped to the same point as $v_{i-\omega}$, thus the mapping of~the whole realization is fully determined by the mapping of $\Gamma_{\omega}$ (i.e., of the first $\omega$ drawings of~the~realization).
\end{remark}

\subsection{Geometry and Topology of Graph Stories}\label{sse:geometry-topology}

The following lemma shows that the realizability problem is in essence more a topological problem than a geometric problem.

\begin{lemma2rep}\label{lem:topological-problem}
A graph story ${\cal S}=(G,\omega,k,\tau)$ is realizable on a set of points $P$, with $|P|=\omega +k$, if and only if it is realizable on any set of points $P'$ with~$|P'|=|P|$. 
\end{lemma2rep}
\begin{appendixproof}
Let ${\cal R} = \langle \Gamma_1, \Gamma_2, \dots, \Gamma_n \rangle$ be a realization of ${\cal S}$ on $P=\{p_1, p_2, \dots, p_{\omega+k}\}$. We show how to construct a realization ${\cal R}'$  of~${\cal S}$ on a given arbitrary set of points~$P'$ starting from ${\cal R}$. The realization ${\cal R}$ implicitly defines a function $\rho(\cdot)$ that for each edge $e$ of $G$ gives the Jordan arc $\rho(e)$ used to represent $e$. Let ${J}$ be the codomain of $\rho$, i.e., the set of Jordan arcs used by ${\cal R}$. We remark that ${J}$ is a set and not a multiset, in the sense that if two edges $e$ and $e'$ of $G$ are mapped to the same Jordan arc, then the curve $\rho(e)=\rho(e')$ is present in ${J}$ only once.
Without loss of generality, we may assume that any two Jordan arcs $c$ and $c'$ of ${J}$ have a finite intersection, i.e., either they do not cross, or they cross on a finite number of points. In fact, if two curves $c$ and $c'$ shared a portion of a curve, we could perturb one of them, say $c'$, so that $c'$ is drawn at an arbitrary small distance from $c$ until it crosses $c$ or diverges from it.

Starting from $P$ and ${J}$, we construct a multigraph ${\cal M}$: For each point $p_i \in P$, with $i=1,2,\dots, \omega+k$, ${\cal M}$ has a vertex $w_i$. For each Jordan arc $c \in {J}$ with endpoints $p_i$ and $p_j$, ${\cal M}$ has an edge between $w_i$ and $w_j$. Observe that the Jordan arcs in ${J}$ also provide a (non-planar) drawing $\Gamma({\cal M})$ of ${\cal M}$, where each pair of edges crosses a finite number of times.
We planarize ${\cal M}$ according to $\Gamma({\cal M})$ by replacing crossings with dummy vertices. Further, we subdivide multiple edges of ${\cal M}$ with degree-2 dummy vertices in order to obtain a plane graph ${\cal G}$.

We draw ${\cal G}$ by preserving its planar embedding on the set of points $P'$ plus an arbitrary set of additional points to host the planarization and subdivision dummy vertices; to this purpose we could use one of the algorithms described in  \cite{DBLP:journals/tcs/BadentGL08,DBLP:journals/gc/PachW01}. Let $\Gamma({\cal G})$ be the obtained planar drawing of ${\cal G}$.

Now observe that a vertex $v$ of $G$ corresponds to a point of $P$, which in turn is associated with a vertex of ${\cal M}$ and with a vertex of ${\cal G}$, which is drawn on a point of $P'$. 
Also, an edge $e$ of $G$ corresponds to a Jordan arc $\rho(e)$ in ${J}$, which in turn corresponds to an edge of ${\cal M}$ and to a simple path $\pi$ in ${\cal G}$ between the two points of $P'$ corresponding to the endpoints of $e$. Hence, we can define a function $\rho'(e)$ which gives, for each edge $e$ of $G$, a Jordan arc that is the concatenation of the curves used in $\Gamma({\cal G})$ to draw the path $\pi$.

Finally, observe that the Jordan arcs $\rho(e)$ and $\rho(e')$ of two edges $e$ and $e'$ of $G$ cross if and only if the corresponding paths in ${\cal G}$ share an intermediate vertex and, hence, if and only if $\rho'(e)$ and $\rho'(e')$ cross.

The other direction of the proof is obvious.
\end{appendixproof}
\begin{proofsketch}
Let ${\cal R}$ be a realization of ${\cal S}$ on $P$. Starting from ${\cal R}$, we construct a realization ${\cal R}'$ of~${\cal S}$ on a given arbitrary set of points~$P'$. Let $\rho(\cdot)$ be a function that for each edge $e$ of $G$ gives the Jordan arc $\rho(e)$ used by $\cal R$ to represent $e$ and let~$J$ be the codomain of $\rho$, i.e., the set (without repetitions) of Jordan arcs used by ${\cal R}$.
Without loss of generality, we may assume that any two Jordan arcs~$c$ and~$c'$ of ${J}$ have a finite intersection. This can be obtained by perturbing $c$ or~$c'$. 

Starting from $P$ and $J$, we construct a multigraph ${\cal M}$ that has a vertex $w_i$ for each point $p_i \in P$, with $i=1,2,\dots, \omega+k$, and an edge $(w_i,w_j)$ for each Jordan arc $c \in J$ with endpoints $p_i$ and $p_j$. Observe that the Jordan arcs in ${J}$ also provide a (non-planar) drawing $\Gamma({\cal M})$ of ${\cal M}$.
We planarize ${\cal M}$ by replacing crossings with dummy vertices. Further, we subdivide multiple edges of ${\cal M}$ in order to obtain a plane graph~${\cal G}$. By exploiting one of the algorithms described in~\cite{DBLP:journals/tcs/BadentGL08,DBLP:journals/gc/PachW01}, we can draw ${\cal G}$ while preserving its planar embedding on the set of points $P'$ plus an arbitrary set of additional points to host the planarization and subdivision dummy vertices, obtaining $\Gamma({\cal G})$.

Observe that a vertex $v$ of $G$ corresponds to a point of $P$, which is associated with a vertex of ${\cal G}$ drawn on a point of $P'$. 
Also, an edge $e$ of $G$ corresponds to a Jordan arc $\rho(e)$ in $J$, which is a simple path $\pi$ in ${\cal G}$. Hence, we define a function $\rho'(e)$ that gives, for each edge $e$ of $G$, a Jordan arc that is the concatenation of the curves used in $\Gamma({\cal G})$ to draw the path $\pi$. 
Finally, the Jordan arcs $\rho(e)$ and $\rho(e')$ of two edges $e$ and $e'$ of $G$ cross if and only if $\rho'(e)$ and $\rho'(e')$ cross.

The other direction of the proof is obvious.
\end{proofsketch}

It is natural to ask whether for every realizable graph story where $G$ is planar, there exists a planar embedding of~$G$ such that each drawing of the realization preserves this embedding. We formalize this concept and show that this is not always the case.
%
Let ${\cal S}$ be a story whose graph $G$ is planar. A \emph{supporting embedding} for $\cal S$ is a planar embedding $\phi$ of $G$ such that $\cal S$ admits a realization $\langle \Gamma_1, \dots, \Gamma_n \rangle$ where the embedding of $\Gamma_i$ is the restriction of $\phi$ to $G_i$~($i=1, \dots, n$).

\begin{lemma2rep}\label{le:supporting-embedding}
There exists a minimal graph story ${\cal S}=(G,\omega,\tau)$ such that: $(i)$ $G$ is planar; $(ii)$ ${\cal S}$ is realizable; and $(iii)$ ${\cal S}$ does not admit a supporting embedding. 
\end{lemma2rep}
\begin{proofsketch}
We produce a minimal graph story ${\cal S}=(G,\omega,\tau)$ such that~$G$ admits a single planar embedding $\phi$ (up to a flip and up to the choice of the external face) and such that in any realization of ${\cal S}$ there is at least one embedding $\phi_i$ of $G_i$ that is not the restriction of $\phi$ to~$G_i$. In this story $\omega=8$, $G$ is the graph in  \cref{fi:no-supporting-embedding}, and $\tau$ is given by the indices of the vertices of $G$.
\begin{figure}[tb]
	\centering
	\subfigure[Graph $G$]{\label{fi:no-supporting-embedding-a}\includegraphics[page=1,width=0.35\textwidth]{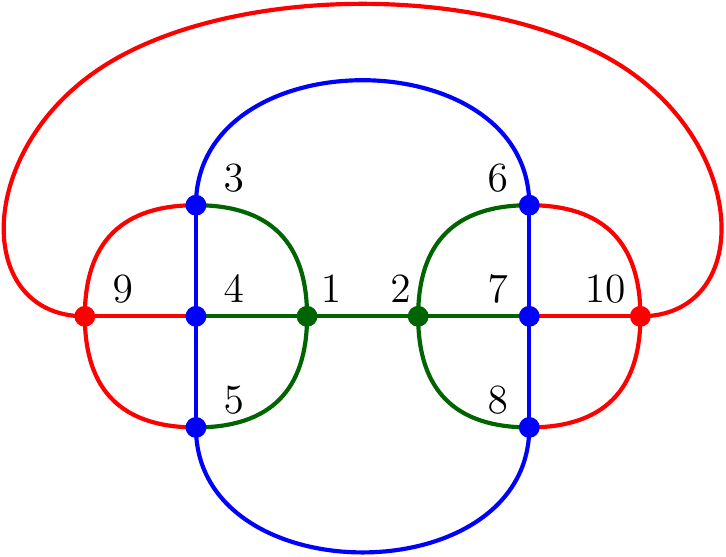}}
	\hfil
	\subfigure[Graph $G_8$]{\label{fi:no-supporting-embedding-b}\includegraphics[page=2,width=0.17\textwidth]{no-supporting-embedding}}
	\hfil
    \subfigure[Graph $G_9$]{\label{fi:no-supporting-embedding-c}\includegraphics[page=3,width=0.17\textwidth]{no-supporting-embedding}}
    \hfil
	\subfigure[Graph $G_{10}$]{\label{fi:no-supporting-embedding-d}\includegraphics[page=4,width=0.17\textwidth]{no-supporting-embedding}}
	\hfil
	\caption{An illustration for \cref{le:supporting-embedding}.}\label{fi:no-supporting-embedding}
\end{figure}
\end{proofsketch}
\begin{appendixproof}
We produce a minimal graph story ${\cal S}=(G,\omega,\tau)$ such that~$G$ admits a single planar embedding $\phi$ (up to a flip and up to the choice of the external face) and such that in any realization of ${\cal S}$ there is at least one embedding $\phi_i$ of $G_i$ that is not the restriction of $\phi$ to~$G_i$.
Consider the graph story ${\cal S}=(G,8,\tau)$ where $G$ is the graph depicted in \cref{fi:no-supporting-embedding-a} and $\tau$ is given by the indices of the vertices of $G$. Observe that $G$ is triconnected, and choose, without loss of generality, the embedding depicted in \cref{fi:no-supporting-embedding-a} for the realization of~$\mathcal{S}$. The restriction of such an embedding to $G_8$ provides the embedding $\phi_8$ of $G_8$. A drawing according to $\phi_8$ of $G_8$ is depicted in \cref{fi:no-supporting-embedding-b}. The drawings of $G_9$ and $G_{10}$ are obtained by deleting $v_1$ and $v_2$, respectively, and adding $v_9$ and $v_{10}$ in the positions where $v_1$ and $v_2$ are drawn in the drawing of $G_8$. \cref{fi:no-supporting-embedding-c,fi:no-supporting-embedding-d} show the obtained embeddings $\phi_9$ and $\phi_{10}$. It can be easily seen that both $\phi_9$ and $\phi_{10}$ are not the restriction of $\phi$ to $G_9$ and $G_{10}$, respectively.
\end{appendixproof}

\subsection{Characterizing Realizable Graph Stories}
We now give a characterization of a realizable graph story ${\cal S}=(G,\omega,k,\tau)$ in terms of a sequence of ``compatible embeddings''. To this aim, we give a generalization of the definition of planar embedding that associates with each face a weight representing how many of $k$ notable points are inside such a face.

A \emph{face-$k$-weighted planar embedding $\phi$ of a planar graph $H$} is a planar embedding of $H$ together with a non-negative integer, called \emph{weight}, for each face of $\phi$ such that the sum of all weights is~$k$. 
%
%
The \emph{removal of a vertex $v$} from a face-$k$-weighted planar embedding $\phi$ of~$H$ produces a face-$(k+1)$-weighted planar embedding $\phi^{-v}$ of~$H \setminus v$ such that the planar embedding of $\phi^{-v}$ is the restriction of the planar embedding of $\phi$ to $H \setminus v$ and the weights of the faces are changed as follows: $(i)$ all the faces in common between $\phi$ and $\phi^{-v}$ have the same weight in $\phi^{-v}$ as in $\phi$, and $(ii)$ the new face of $\phi^{-v}$ resulting by the removal of $v$ has a weight that is one plus the sum of the weights of the faces of $\phi$~incident~to~$v$. 

Let ${\cal S}$ be a graph story and let $\phi_{i}$ be a face-$k$-weighted planar embedding of~$G_{i}$, for $i \in \{\omega, \dots, n\}$. 
Two face-$k$-weighted planar embeddings~$\phi_{i-1}$ and $\phi_i$, with $i=\omega+1, \dots, n$, are \emph{compatible} if removing $v_{i-\omega}$ from  $\phi_{i-1}$ produces the same face-$(k+1)$-weighted planar embedding of $G_{i-1} \cap G_i$ as removing $v_i$ from~$\phi_{i}$. 


\begin{lemma2rep}\label{le:compatibility-non-minimal}
A graph story ${\cal S}=(G,\omega,k,\tau)$ is realizable if and only if there exists a sequence $\langle \phi_{\omega}, \phi_{\omega+1}, \dots, \phi_{n} \rangle$ of face-$k$-weighted planar embeddings for the graphs $\langle G_{\omega}, G_{\omega+1}, \dots, G_{n} \rangle$, such that $\phi_{i-1}$ and $\phi_i$ are compatible $(\omega+1 \leq i \leq n)$.
\end{lemma2rep}
\begin{proofsketch}
We prove here only one direction.
Suppose there exists a sequence of face-$k$-weighted planar embeddings $\langle \phi_{\omega}, \phi_{\omega+1}, \dots, \phi_{n} \rangle$ such that any two consecutive face-$k$-weighted planar embeddings are compatible. 
Let $\Gamma_{\omega}$ be any planar drawing of $G_{\omega}$ and let $P$ be the set of points of $\Gamma_{\omega}$ corresponding to the vertices of $G_{\omega}$ plus $k$ unused points arbitrarily distributed inside the faces of $\Gamma_{\omega}$, according to the weights of $\phi_{\omega}$. 
For each $i = 1,  \dots, \omega-1$, define $\Gamma_i$ as the restriction of $\Gamma_{\omega}$ to~$G_i$. 
For each $i=\omega+1, \dots, n$, by the compatibility of $\phi_i$ with $\phi_{i-1}$, the removal of vertex $v_{i-\omega}$ from $\Gamma_i$ yields a drawing $\Gamma^{\cap}$ of $G^{\cap} = G_{i-1} \cap G_i$ that has the same face-$(k+1)$-weighted embedding $\phi^{\cap} = \phi_{i-1} \setminus v_{i-\omega} = \phi_i \setminus v_i$ of $G^{\cap}$.
We construct $\Gamma_i$ from $\Gamma^{\cap}$ by inserting $v_i$ inside the face of~$\Gamma^{\cap}$ corresponding to the face of~$\phi^{\cap}$ generated by the removal of $v_i$ from~$\phi_i$. Also, we 
planarly insert each edge connecting $v_i$ to each of its neighbors according to $\phi_i$, without changing the starting drawing and by leaving on each generated face $f$ the number of unused points that corresponds to the weight of $f$ in~$\phi_i$ (e.g., using the technique in~\cite{DBLP:journals/jgaa/ChanFGLMS15}, where the unused points are regarded as isolated vertices). The sequence $\langle \Gamma_1, \Gamma_2, \dots, \Gamma_n \rangle$ satisfies Properties~\textsf{R1} and~\textsf{R2}, i.e., it is a realization of $\cal S$.
\end{proofsketch}
\begin{appendixproof}
Suppose first that ${\cal S}$ is realizable and let ${\cal R} = \langle \Gamma_1, \dots, \Gamma_n \rangle$ be a realization of ${\cal S}$. Consider two consecutive drawings $\Gamma_{i-1}$ and $\Gamma_{i}$ of $G_{i-1}$ and $G_i$, respectively. Define a face-$k$-weighted planar embedding $\phi_{i-1}$ of $G_{i-1}$ ($\phi_{i}$ of $G_{i}$, respectively), where the planar embedding is that of $\Gamma_{i}$ ($\Gamma_{i+1}$, respectively) and the weight of each face is the number of unused points that the face contains in the drawing $\Gamma_{i}$ ($\Gamma_{i+1}$, respectively).
By Property~\textsf{R2} of \cref{df:realization}, the restrictions of $\Gamma_{i-1}$ and $\Gamma_{i}$ to $G_{i-1} \cap G_{i}$ is the same drawing, comprehensive of the positions of the $k+1$ unused points. We use the $k+1$ unused points of $G_{i-1} \cap G_{i}$ to define a face-$(k+1)$-weighted planar embedding $\phi^{\cap}$ of $G_{i-1} \cap G_{i}$, where the planar embedding is the one of $\Gamma_{i} \cap \Gamma_{i+1}$ and the weight of each face is the number of unused points it contains in the drawing $\Gamma_{i} \cap \Gamma_{i+1}$.
It is immediate to see that removing vertex $v_{i-\omega}$ from $\phi_{i-1}$ as well as removing vertex $v_i$ from $\phi_{i}$ produces in both cases $\phi^{\cap}$, i.e., $\phi_{i-1}$ and $\phi_{i}$ are compatible.

Suppose vice versa that there exists a sequence of face-$k$-weighted planar embeddings $\langle \phi_{\omega}, \phi_{\omega+1}, \dots, \phi_{n} \rangle$ such that any two consecutive face-$k$-weighted planar embeddings in the sequence are compatible. 
Let $\Gamma_{\omega}$ be any planar drawing of $G_{\omega}$ and let $P$ be the set of points of $\Gamma_{\omega}$ corresponding to the vertices of $G_{\omega}$ plus $k$ unused points arbitrarily distributed inside the faces of $\Gamma_{\omega}$ according to the weights of $\phi_{\omega}$. 
For each $i = 1,  \dots, \omega-1$, define $\Gamma_i$ as the restriction of $\Gamma_{\omega}$ to~$G_i$. 
For each $i=\omega+1, \dots, n$, by the compatibility of $\phi_i$ with $\phi_{i-1}$, we have that removing vertex $v_{i-\omega}$ from $\Gamma_i$ yields a drawing $\Gamma^{\cap}$ of $G^{\cap} = G_{i-1} \cap G_i$ that has the same face-$(k+1)$-weighted embedding $\phi^{\cap} = \phi_{i-1} \setminus v_{i-\omega} = \phi_i \setminus v_i$ of $G^{\cap}$.
We construct $\Gamma_i$ from $\Gamma^{\cap}$ by inserting $v_i$ inside the face of~$\Gamma^{\cap}$ corresponding to the face of~$\phi^{\cap}$ that is generated by the removal of $v_i$ from~$\phi_i$. Also, we 
planarly insert each edge connecting $v_i$ to each of its neighbors according to embedding~$\phi_i$, without changing the starting drawing and by leaving on each generated face $f$ the number of unused points that corresponds to the weight of $f$ in~$\phi_i$ (for example using the technique in~\cite{DBLP:journals/jgaa/ChanFGLMS15}, where the unused points are regarded as isolated vertices). The sequence $\langle \Gamma_1, \Gamma_2, \dots, \Gamma_n \rangle$ satisfies Properties~\textsf{R1} and~\textsf{R2}, i.e., it is a realization of $\cal S$ on $P$. 
\end{appendixproof}

\section{Realizability Testing of Graph Stories}\label{se:realizability-testing}

We first prove that testing whether a graph story is realizable is \NP-hard for any given integer $k \geq 0$ (\cref{th:hardness-non-minimal}). Then we prove the that the problem is in~\FPT~when parameterized by~$\sigma=\omega+k$ (\cref{th:fpt-non-minimal}).

\begin{theorem2rep}\label{th:hardness-non-minimal}
For any integer $k \geq 0$, testing the realizability of a graph story ${\cal S}=(G,\omega,k,\tau)$ is \NP-hard.
\end{theorem2rep}
\begin{proofsketch}
We use a reduction from the \textsc{Sunflower SEFE} problem, which is defined as follows. Let $G'_1, G'_2, \dots, G'_l$ be graphs on the same vertex-set such that each edge in the union of all graphs belongs either to only one of the input graphs or to all the input graphs.
\textsc{Sunflower SEFE} asks whether
there exists a drawing $\Gamma'$ of $G'_1 \cup G'_2 \cup \dots \cup G'_l$ such that two edges cross only if they do not belong to the same graph $G'_i$.
\textsc{Sunflower SEFE} is \NP-hard~for~$l \ge 3$~\cite{DBLP:journals/jgaa/Schaefer13}.

\begin{figure}[tb]
	\centering
	\subfigure[]{\label{fi:np-hardness2-a}\includegraphics[page=1,width=0.39\columnwidth]{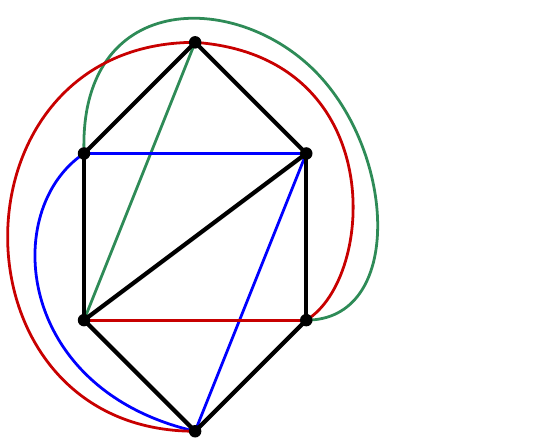}}
	\hfil
	\subfigure[]{\label{fi:np-hardness2-b}\includegraphics[page=2,width=0.39\columnwidth]{np-hardness2}}\\
	\subfigure[$\Gamma_\omega$]{\label{fi:np-hardness2-c}\includegraphics[page=3,width=0.32\columnwidth]{np-hardness2}}
	\hfil
	\subfigure[$\Gamma_{\omega+2\tilde{\omega}}$]{\label{fi:np-hardness2-d}\includegraphics[page=4,width=0.32\columnwidth]{np-hardness2}}
	\hfil
	\subfigure[$\Gamma_{\omega+4\tilde{\omega}}$]{\label{fi:np-hardness2-e}\includegraphics[page=5,width=0.32\columnwidth]{np-hardness2}}
	\caption{(a) A drawing $\Gamma'$ of a \texttt{yes} instance $\{G'_1,G'_2,G'_3\}$ of \textsc{Sunflower SEFE}. The edges of $E'_\cap$ are black; the edges of $E'_1$, $E'_2$, and $E'_3$ are red, blue, and green, respectively. (b) A drawing of graph $G$ of the story ${\cal S}=(G,\omega,4,\tau)$ constructed from the instance of \cref{fi:np-hardness2-a}. The vertices of $D^A_1$ ($D^A_2$, $D^A_3$, resp.) and $D^B_1$ ($D^B_2$, $D^B_3$, resp.) are red (blue, green, resp.); the vertices of $\Delta_i$ ($1 \le i \le 4$) are purple.
	The points of $P$ are represented as yellow~disks.} \label{fi:np-hardness2}
\end{figure}

Starting from an instance of \textsc{Sunflower SEFE} with $l=3$, we construct a non-minimal graph story 
${\cal S}=(G=(V,E),\omega,k,\tau)$ as follows; refer to \cref{fi:np-hardness2} for an example with $k=4$.
Let $G'_1,G'_2$, and $G'_3$ be the input graphs of \textsc{Sunflower SEFE} with vertex-set $V'$, let $E'_i$ be the set of edges that belong only to graph~$G'_i$ ($1 \le i \le 3$), and let $E'_\cap$ be the set of edges that belong to all the input graphs. Without loss of generality, we can assume that $|E'_1| \ge |E'_2| \ge |E'_3|$.

We now show how we define sets $V$ and $E$.
For every graph $G'_i$ ($1 \le i \le 3$), we subdivide each edge $e$ of $E'_i$ with two vertices $d_e^A$ and $d_e^B$ and we add them to two sets $D_i^A$ and $D_i^B$, respectively; see $d^A_{(u,y)}$ and $d^B_{(u,y)}$ in \cref{fi:np-hardness2-b}. We add the three edges obtained by subdividing $e$ to a set $E''_i$. If needed, we enrich sets $D_2^A$, $D_2^B$, $D_3^A$, and $D_3^B$ with isolated vertices so that all the sets $D_i^X$ have the same cardinality $\tilde{\omega} = |E'_1|$ (note that $\tilde{\omega} = |D_1^A| = |D_1^B|$), with $1 \le i \le 3$ and $X \in \{A,B\}$; see the green isolated vertices in \cref{fi:np-hardness2-b}. Also, we create four sets $\Delta_j$ ($1 \le j \le 4$) of $\tilde{\omega}$ isolated vertices $\delta_{j,1}, \dots, \delta_{j,\tilde{\omega}}$; see the purple isolated vertices in \cref{fi:np-hardness2-b}. We define $V = V' \cup D_1^A \cup D_1^B \cup D_2^A \cup D_2^B \cup D_3^A \cup D_3^B \cup \Delta_1 \cup \Delta_2 \cup \Delta_3 \cup \Delta_4$ and $E = E'_\cap \cup E''_1 \cup E''_2 \cup E''_3$.
We define $\omega$ as $\omega = |V'| + 6\tilde{\omega}$.
Finally, we suitably define $\tau$ in such a way that the vertices of the various subsets of $V$ appear in the following order: $\langle D_1^A, \Delta_1, D_2^A, \Delta_2, D_3^A, V', D_1^B, \Delta_3, D_2^B, \Delta_4, D_3^B \rangle$.

$\cal S$ is constructed in $O(|V'|)$ time and $\omega \in O(|V'|)$. We show in the appendix that $\cal S$ is realizable if and only if $\{G'_1,G'_2,G'_3\}$ is a~\texttt{yes} instance of \textsc{Sunflower SEFE}. Refer to \cref{fi:np-hardness2-c,fi:np-hardness2-d,fi:np-hardness2-e} for an example.
\end{proofsketch}
\begin{appendixproof}
We use a reduction from the \textsc{Sunflower SEFE} problem, which is defined as follows. Let $G'_1, G'_2, \dots, G'_l$ be graphs having the same vertex-set $V'$ such that each edge in the union of all graphs belongs either to only one of the input graphs or to all the input graphs.
The \textsc{Sunflower SEFE} problem asks whether
there exist $l$ planar drawings $\Gamma'_1,\Gamma'_2,\dots,\Gamma'_l$ of $G'_1, G'_2, \dots, G'_l$, respectively, such that: $(i)$ each vertex of $V'$ is mapped to the same point in every drawing $\Gamma'_i$ ($1 \le i \le l$); $(ii)$ each edge that is common to all the input graphs is represented by the same simple curve in the drawings of all such graphs. In other words, the problem asks whether there exists a drawing $\Gamma'$ of $G'_1 \cup G'_2 \cup \dots \cup G'_l$ such that two edges cross only if they do not belong to the same graph $G'_i$.
The \textsc{Sunflower SEFE} problem is \NP-complete for $l \ge 3$~\cite{DBLP:journals/jgaa/Schaefer13}.


Starting from an instance of the \textsc{Sunflower SEFE} problem with $l=3$, we construct a non-minimal graph story 
${\cal S}=(G=(V,E),\omega,k,\tau)$ as follows; refer to \cref{fi:np-hardness2} for an example with $k=4$.
Let $G'_1,G'_2$, and $G'_3$ be the input graphs of \textsc{Sunflower SEFE} having vertex-set $V'$, let $E'_i$ be the set of edges that belong only to graph $G'_i$ ($1 \le i \le 3$), and let $E'_\cap$ be the set of edges that belong to all the input graphs. Without loss of generality, we can assume that $|E'_1| \ge |E'_2| \ge |E'_3|$.

We now show how we define sets $V$ and $E$.
For every graph $G'_i$ ($1 \le i \le 3$), we subdivide each edge $e$ of $E'_i$ with two vertices $d_e^A$ and $d_e^B$ and we add them to two sets $D_i^A$ and $D_i^B$, respectively; see, e.g., $d^A_{(u,y)}$ and $d^B_{(u,y)}$ in \cref{fi:np-hardness2-b}. We add the three edges obtained by subdividing $e$ to a set $E''_i$. If needed, we enrich sets $D_2^A$, $D_2^B$, $D_3^A$, and $D_3^B$ with isolated vertices so that all the sets $D_i^X$ have the same cardinality $\tilde{\omega} = |E'_1|$ (note that $\tilde{\omega} = |D_1^A| = |D_1^B|$), with $1 \le i \le 3$ and $X \in \{A,B\}$; see, e.g., the green isolated vertices in \cref{fi:np-hardness2-b}. Also, we create four sets $\Delta_j$ ($1 \le j \le 4$) of $\tilde{\omega}$ isolated vertices $\delta_{j,1}, \dots, \delta_{j,\tilde{\omega}}$; see, e.g., the purple isolated vertices in \cref{fi:np-hardness2-b}. We define the set $V$ of vertices of $G$ as $V = V' \cup D_1^A \cup D_1^B \cup D_2^A \cup D_2^B \cup D_3^A \cup D_3^B \cup \Delta_1 \cup \Delta_2 \cup \Delta_3 \cup \Delta_4$ and the set $E$ of edges of $G$ as $E = E'_\cap \cup E''_1 \cup E''_2 \cup E''_3$.
We set the size $\omega$ of the viewing window as $\omega = |V'| + 6\tilde{\omega}$.
Finally, we suitably define $\tau$ in such a way that the vertices of the various subsets of $V$ appear in the following order: $\langle D_1^A, \Delta_1, D_2^A, \Delta_2, D_3^A, V', D_1^B, \Delta_3, D_2^B, \Delta_4, D_3^B \rangle$.

Observe that $\cal S$ can be constructed in $O(|V'|)$ time and that $\omega \in O(|V'|)$.~We now show that $\cal S$ is realizable if and only if the triplet $\{G'_1,G'_2,G'_3\}$ is a~\texttt{yes} instance of \textsc{Sunflower SEFE}. Refer to \cref{fi:np-hardness2-c,fi:np-hardness2-d,fi:np-hardness2-e} for an example.

\smallskip
\noindent($\Rightarrow$) Let $P$ be a set of $\omega+k$ points.
If $\cal S$ admits a realization on $P$, there exists a sequence of drawings $\langle \Gamma_1, \Gamma_2, \dots, \Gamma_{|V|} \rangle$ that satisfies Properties~\textsf{R1} and \textsf{R2}.

Drawing $\Gamma_\omega$ is induced by all vertices of $D_1^A$, $\Delta_1$, $D_2^A$, $\Delta_2$, $D_3^A$, $V'$, and $D_1^B$. This drawing is crossing-free by Property~\textsf{R1} and its restriction to the vertices of $V'$, $D_1^A$, and $D_1^B$ is a subdivision of a drawing of $G'_1$. Similarly, drawing $\Gamma_{\omega+2\tilde{\omega}}$ is induced by all vertices of $D_2^A$, $\Delta_2$, $D_3^A$, $V'$, $D_1^B$, $\Delta_3$, and $D_2^B$; its restriction to the vertices of $V'$, $D_2^A$, and $D_2^B$ is a subdivision of a (planar) drawing of $G'_2$. Finally, drawing $\Gamma_{\omega+4\tilde{\omega}} (= \Gamma_{|V|})$ is induced by all vertices of $D_3^A$, $V'$, $D_1^B$, $\Delta_3$, $D_2^B$, $\Delta_4$, and $D_3^B$; its restriction to the vertices of $V'$, $D_3^A$, and $D_3^B$ is a subdivision of a (planar) drawing of $G'_3$.

By Property~\textsf{R2}, $\Gamma_{\omega}$, $\Gamma_{\omega+2\tilde{\omega}}$, and $\Gamma_{\omega+4\tilde{\omega}}$ are such that their restrictions to their common subgraph are identical. Since $V'$ belongs to all these drawings, the edges in $E'_\cap$ are drawn identically in $\Gamma_{\omega}$, $\Gamma_{\omega+2\tilde{\omega}}$, and $\Gamma_{\omega+4\tilde{\omega}}$.

\smallskip
\noindent($\Leftarrow$) If the triplet $\{G'_1,G'_2,G'_3\}$ is a \texttt{yes} instance of \textsc{Sunflower SEFE}, there exists a drawing $\Gamma'$ of $G'_1 \cup G'_2 \cup G'_3$ such that two edges cross only if they do not belong to the same graph $G'_i$ ($1 \le i \le 3$).

The set~$P$ that we use for the realization of $\cal S$ is the union of the following sets: $(i)$ A set $P'$ consisting of the points of $\Gamma'$ to which the vertices of $V'$ are mapped, thus $|P'|=|V'|$; $(ii)$ a set $P_i^A$ ($P_i^B$, resp.) consisting of an arbitrarily chosen point $p_e^A$ ($p_e^B \neq p_e^A$, resp.) for each edge $e$ of $E'_i$ ($1 \le i \le 3$), such that $p_e^A$ ($p_e^B$, resp.) belongs to the simple curve representing $e$ in $\Gamma'$, thus $|P_i^A|=|E'_i|$ ($|P_i^B|=|E'_i|$, resp.); $(iii)$ a set $P_d$ consisting of $2(|E'_1|-|E'_2|) + 2(|E'_1|-|E'_3|)$ additional points;
$(iv)$ a set $P_k$ of $k$ additional points. Note that, $|P|=|V'|+6|E'_1|+k=|V'|+6\tilde{\omega}+k = \omega+k$.
We now prove that $\cal S$ admits a realization~on~$P$.

Recall that $V=V' \cup D_1^A \cup D_1^B \cup D_2^A \cup D_2^B \cup D_3^A \cup D_3^B \cup \Delta_1 \cup \Delta_2 \cup \Delta_3 \cup \Delta_4$ and that $\tau$ is such that the vertices appear in the following order: $\langle D_1^A, \Delta_1, D_2^A, \Delta_2, D_3^A, V', D_1^B, \Delta_3, D_2^B, \Delta_4, D_3^B \rangle$. Observe that, by construction, each non-isolated vertex of $D_i^X$ can be uniquely associated with a distinct point of $P_i^X$, where $1 \le i \le 3$ and $X \in \{A,B\}$. Namely, for each edge $e \in E'_i$, we associate vertex $d^A_e$ to $p^A_e$ and vertex $d^B_e$ to $p^B_e$.  
Drawing $\Gamma_\omega$ is such that: \begin{itemize}
	\item Each vertex of $D_1^A$ is mapped to the corresponding point of $P_1^A$;
	\item the vertices of $\Delta_1$ are distributed on all the points of $P_2^B$ and on $|E'_1|-|E'_2|$ points of $P_d$;
	\item each vertex of $D_2^A$ is mapped to the corresponding point of $P_2^A$;
	\item the vertices of $\Delta_2$ are distributed on all the points of $P_3^B$ and on $|E'_1|-|E'_3|$ points of $P_d$;
	\item each vertex of $D_3^A$ is mapped to the corresponding point of $P_3^A$;
	\item the vertices of $V'$ are mapped on the points of $P'$; 
	\item each vertex of $D_1^B$ is mapped to the corresponding point of $P_1^B$.
\end{itemize}

The edges of $\Gamma_\omega$ are the portions of the edges of $\Gamma'$ between points of $P$ to which the vertices of $\Gamma_\omega$ are mapped. $\Gamma_\omega$ is planar, since it is a subdivision of the planar subgraph of $\Gamma'$ induced by the edges of $E'_1 \cup E'_\cap$, plus some degree-1 vertices (the ones in $D_2^A$ and $D_3^A$), plus the isolated vertices in $\Delta_1$ and $\Delta_2$. Also, for $j=1,2, \dots, \omega-1$, define $\Gamma_j$ as the restriction of $\Gamma_\omega$ to $G_j$. This implies that the sequence $\langle \Gamma_1, \Gamma_2, \dots, \Gamma_\omega \rangle$ satisfies Properties~\textsf{R1} and~\textsf{R2}.

Drawing $\Gamma_{\omega+\tilde{\omega}}$ is obtained by replacing each vertex of $D_1^A$ with the isolated vertices of $\Delta_3$. The edges of $\Gamma_{\omega+\tilde{\omega}}$ are the portions of the edges of $\Gamma'$ between points of $P$ to which the vertices of $\Gamma_{\omega+\tilde{\omega}}$ are mapped. It is easy to see that Properties~\textsf{R1} and~\textsf{R2} are satisfied 
by the sequence $\langle \Gamma_\omega, \Gamma_{\omega+1}, \dots, \Gamma_{\omega+\tilde{\omega}} \rangle$.

In drawing $\Gamma_{\omega+2\tilde{\omega}}$, the vertices of $D_2^B$ replace the isolated vertices of $\Delta_1$, which were distributed on the points of $P_2^B$ and on $|E'_1|-|E'_2|$ points of $P_d$.
The edges of $\Gamma_{\omega+2\tilde{\omega}}$ are the portions of the edges of $\Gamma'$ between points of $P$ to which the vertices of $\Gamma_{\omega+2\tilde{\omega}}$ are mapped.
$\Gamma_{\omega+2\tilde{\omega}}$ is planar, since it is a subdivision of the planar subgraph of $\Gamma'$ induced by the edges of $E'_2 \cup E'_\cap$, plus some degree-1 vertices (the ones in $D_1^B$ and $D_3^A$), plus the isolated vertices in $\Delta_2$ and $\Delta_3$.
Also, for $j= \omega+\tilde{\omega}, \omega+\tilde{\omega}+1, \dots, \omega+2\tilde{\omega}-1$ define $\Gamma_j$ as the restriction of $\Gamma_{\omega+2\tilde{\omega}}$ to $G_j$. This implies that the sequence $\langle \Gamma_{\omega+\tilde{\omega}},\Gamma_{\omega+\tilde{\omega}+1}, \dots, \Gamma_{\omega+2\tilde{\omega}} \rangle$ satisfies Properties~\textsf{R1} and~\textsf{R2}.

Drawing $\Gamma_{\omega+3\tilde{\omega}}$ is obtained by replacing each vertex of $D_2^A$ with the isolated vertices of $\Delta_4$. The edges of $\Gamma_{\omega+3\tilde{\omega}}$ are the portions of the edges of $\Gamma'$ between points of $P$ to which the vertices of $\Gamma_{\omega+3\tilde{\omega}}$ are mapped. It is easy to see that Properties~\textsf{R1} and~\textsf{R2} are by the sequence $\langle \Gamma_{\omega+2\tilde{\omega}}, \Gamma_{\omega+2\tilde{\omega}+1},\dots,\Gamma_{\omega+3\tilde{\omega}} \rangle$.

In drawing $\Gamma_{\omega+4\tilde{\omega}} (= \Gamma_{|V|})$, the vertices of $D_3^B$ replace the isolated vertices of $\Delta_2$, which were distributed on the points of $P_3^B$ and on $|E'_1|-|E'_3|$ points of $P_d$.
The edges of $\Gamma_{\omega+4\tilde{\omega}}$ are the portions of the edges of $\Gamma'$ between points of $P$ to which the vertices of $\Gamma_{\omega+4\tilde{\omega}}$ are mapped.
$\Gamma_{\omega+4\tilde{\omega}}$ is planar, since it is a subdivision of the planar subgraph of $\Gamma'$ induced by the edges of $E'_3 \cup E'_\cap$, plus some degree-1 vertices (the ones in $D_1^B$ and $D_2^B$), plus the isolated vertices in $\Delta_3$ and $\Delta_4$.
Also, for $j=\omega+3\tilde{\omega},\omega+3\tilde{\omega}+1, \dots, \omega+4\tilde{\omega}-1$ define $\Gamma_j$ as the restriction of $\Gamma_{\omega+4\tilde{\omega}}$ to $G_j$. This implies that the sequence $\langle \Gamma_{\omega+3\tilde{\omega}}, \Gamma_{\omega+3\tilde{\omega}+1}, \dots, \Gamma_{\omega+4\tilde{\omega}} \rangle$ satisfies Properties~\textsf{R1} and~\textsf{R2}.

Observe that the obtained realization of $\cal{S}$ is such that, for each drawing $\Gamma_i$ ($i=1,\dots,|V|$) no vertex is mapped to a point of $P_k$.

\smallskip
We finally show that ${\cal S}$ admits a realization on a set of $\omega+k$ points if and only if it admits a realization on a set of $\omega$ points.
Clearly, if ${\cal S}$ admits a realization on $\omega$ points, it also admits a realization on a set of $\omega+k$ points.
For the other direction, suppose by contradiction that ${\cal S}$ admits a realization on a set of $\omega+k$ points and that it does not admit a realization on any set of $\omega$ points.
Note that in each drawing $\Gamma_i$ ($i=1,2,\dots,|V|$) of the realization, there are $k$ points to which no vertex is mapped.
Also, observe that $|D_1^A  \cup D_2^A \cup D_3^A \cup V' \cup D_1^B \cup D_2^B \cup D_3^B| = \omega$. Hence, there are at least $k$ points to which only vertices of $\Delta_1 \cup \Delta_2 \cup \Delta_3 \cup \Delta_4$ are mapped in some set of drawings of the realization. 
These vertices are isolated and thus they could have been mapped (without creating crossings) to points to which vertices of $D_1^A  \cup D_2^A \cup D_3^A \cup V' \cup D_1^B \cup D_2^B \cup D_3^B$ are mapped in some set of drawings of the realization. This implies that the $k$ extra-points could have not been used, and thus that ${\cal S}$ admits a realization on a set of $\omega$ points.
\end{appendixproof}


\begin{theorem2rep}\label{th:fpt-non-minimal}
Let ${\cal S}=(G,\omega,k,\tau)$ be a graph story and let $n$ be the number of vertices of $G$. There exists an 
$O(n \cdot 2^{(4 \sigma+1) \log_2 \sigma})$-time algorithm that tests whether~${\cal S}$ is realizable, where $\sigma=\omega+k$.
\end{theorem2rep}
\begin{proofsketch}
For each subgraph $G_i$ $(i = \omega, \dots, n)$, let ${\cal E}_i=\{\phi_i^1, \phi_i^2, \dots, \phi_i^{s_i}\}$~be the set of all planar face-$k$-weighted embeddings of $G_i$. We construct a directed acyclic graph $D$ as follows:
$(i)$ For each $\phi_i^j \in {\cal E}_i$ ($i = \omega, \dots, n$ and $j = 1, \dots, s_i$), $D$ has a node $v_i^j$ corresponding to $\phi_i^j$.   
$(ii)$ For each pair of elements $\phi_i^j$ and $\phi_{i+1}^r$ $(\omega \leq i \leq n-1; 1 \leq j \leq s_i; 1 \leq r \leq s_{i+1})$, $D$ contains~a~directed edge $(v_i^j,v_{i+1}^r)$ if and only if $\phi_i^j$ and $\phi_{i+1}^r$ are compatible face-$k$-weighted embeddings.

Each set ${\cal E}_i$, with $i=\omega, \dots, n$, defines a distinct \emph{layer} of vertices of~$D$, called \emph{layer $i$}.  
By construction, each vertex of layer $i$ can only have outgoing edges towards vertices of layer $i+1$ (if $i < n$) and incoming edges from vertices of layer $i-1$ (if $i > \omega$). We finally augment $D$ with a dummy source $s$ connected with outgoing edges to all vertices of layer 1 and with a dummy sink $t$ connected with incoming edges to all vertices of layer $n$. 
By \cref{le:compatibility-non-minimal}, $\cal S$ is realizable if and only if there is a directed path from $s$ to $t$ in $D$.
The time complexity of the algorithm is analyzed in the appendix.
\end{proofsketch}
\begin{appendixproof}
For each subgraph $G_i$ $(i = \omega, \dots, n)$, let ${\cal E}_i=\{\phi_i^1, \phi_i^2, \dots, \phi_i^{s_i}\}$~be the set of all planar face-$k$-weighted embeddings of $G_i$. We construct a DAG~(directed acyclic graph) $D$ as follows:
$(i)$ For each element $\phi_i^j \in {\cal E}_i$ ($i = \omega, \dots, n$ and $j = 1, \dots, s_i$), $D$ has a node $v_i^j$ corresponding to $\phi_i^j$.   
$(ii)$ For each~pair~of~elements $\phi_i^j$ and $\phi_{i+1}^r$ $(\omega \leq i \leq n-1; 1 \leq j \leq s_i; 1 \leq r \leq s_{i+1})$, $D$ contains~a~directed edge $(v_i^j,v_{i+1}^r)$ if and only if $\phi_i^j$ and $\phi_{i+1}^r$ are compatible face-$k$-weighted embeddings.

Each set ${\cal E}_i$, with $i=\omega, \dots, n$, defines a distinct \emph{layer} of vertices of~$D$, called \emph{layer $i$}.  
By construction, each vertex of layer $i$ can only have outgoing edges towards vertices of layer $i+1$ (if $i < n$) and incoming edges from vertices of layer $i-1$ (if $i > \omega$). We complete the construction of $D$ by adding a dummy source $s$ connected with outgoing edges to all vertices of layer 1  and a dummy sink $t$ connected with incoming edges to all vertices of layer $n$. 
By \cref{le:compatibility-non-minimal}, we have that $\cal S$ is realizable if and only if there is a directed path $\Pi$ from $s$ to $t$ in $D$. In fact $\Pi$ (if any) consists of exactly one vertex per layer, and the sequence of its vertices from layer $\omega$ to layer $n$ corresponds to a sequence of compatible face-$k$-weighted planar embeddings of $\langle G_\omega, G_{\omega+1}, \dots, G_n \rangle$.   

We now analyze the time complexity of the given algorithm.
The number $|{\cal E}_i|$ of nodes in layer $i$ equals the number of distinct face-$k$-weighted embeddings of~$G_i$. This number is the product of two factors: the number $\pi$ of possible planar embeddings of~$G_i$ times the number $\rho$ of ways you have to distribute $k$ units of weight among the faces of each planar embedding of~$G_i$. 
If~$G_i$ is connected, $\pi$ is upper bounded by the number of possible rotation systems for $G_i$, i.e., $\pi=O(\omega!)$. If $G_i$ is not connected, for each rotation system we also have to consider all possible ways of arranging a component inside the face of some other component. Since the number of faces is $O(\omega)$, this leads to $\pi = O(\omega! \cdot \omega^\omega)$ planar embeddings, which can be increased to $O(\omega! \cdot (\omega+k)^\omega)$. Observe that this space of planar embeddings is actually computable at the same cost 
using SPQR-trees and BC-trees for describing the planar embeddings of each connected component and using an inclusion tree for describing the inclusion relationships among the different plane connected components. 
The number $\rho$ of ways you have to distribute $k$ units of weight among the faces of a planar embedding of~$G_i$ can be obtained (using the stars and bars metaphor popularized by \cite{feller1}) as $\rho = {{\omega-1+k}\choose{\omega-1}} = \frac{\omega}{\omega+k} {{\omega+k}\choose{\omega}}$. Since ${{n}\choose{m}} = O(\frac{n^m}{m!})$, we have $\rho = O(\frac{\omega}{\omega+k} \cdot \frac{(\omega+k)^\omega}{\omega!})$, which can be increased to $\rho = O(\frac{(\omega+k)^\omega}{\omega!})$.
Hence we have $\pi \cdot \rho = O(\omega! \cdot (\omega+k)^\omega \cdot \frac{(\omega+k)^\omega}{\omega!}) = O((\omega + k)^{2\omega}) = O( 2^{2\omega \log_2(\omega+k)}$).

Since we have $O(n)$ layers, generating the vertex set of $D$ takes $O(n \cdot \pi \cdot \rho)$. 
The number of edges of $D$ is $O(n \cdot (\pi \cdot \rho)^2)$, and checking whether we have to add an edge between two vertices of consecutive layers of~$D$ can be done in $O(\omega) = O(\omega+k)$ time, as we need to test the compatibility of the embeddings corresponding to the two vertices. Hence, generating the edge set of $D$ takes $O((\omega+k) \cdot n \cdot (\pi \cdot \rho)^2)$ time. Finally, checking whether a directed path from $s$ to $t$ exists in $D$ takes linear time in the size of~$D$. It follows that the whole testing algorithm takes 
$O((\omega+k) \cdot n \cdot (\pi \cdot \rho)^2) = O(n \cdot 2^{\log(\omega+k)} \cdot 2^{4(\omega+k) \log_2 (\omega + k)}) = O(n \cdot 2^{(4\sigma+1)\log_2 \sigma})$  time.
\end{appendixproof}

\noindent When $k=0$, we have \cref{co:fpt-window-size}. Also, \cref{th:hardness-non-minimal,th:fpt-non-minimal} imply \cref{co:completeness-non-minimal}.

\begin{corollary}\label{co:fpt-window-size}
Let ${\cal S}=(G,\omega,\tau)$ be a minimal graph story and let $n$ be the number of vertices of $G$. There exists an  $O(n \cdot 2^{(4\omega+1) \log_2 \omega})$-time algorithm that tests whether~${\cal S}$ is realizable.
\end{corollary}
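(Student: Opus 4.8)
The plan is to derive the statement as an immediate specialization of \cref{th:fpt-non-minimal} to the case $k=0$. By \cref{re:minimality}, a minimal graph story ${\cal S}=(G,\omega,\tau)$ is exactly a graph story ${\cal S}=(G,\omega,0,\tau)$ whose number of extra points is zero; hence I would feed this tuple directly into the testing algorithm guaranteed by \cref{th:fpt-non-minimal}, whose correctness (the reduction, via \cref{le:compatibility-non-minimal}, to $s$--$t$ reachability in the layered DAG of compatible face-$k$-weighted embeddings) is established for every non-negative integer $k$ and therefore applies verbatim when $k=0$.

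The only computation is to substitute $\sigma=\omega+k=\omega$ into the running-time bound $O(n\cdot 2^{(4\sigma+1)\log_2\sigma})$ of \cref{th:fpt-non-minimal}, which yields precisely $O(n\cdot 2^{(4\omega+1)\log_2\omega})$. To make sure nothing degenerates pathologically, I would also note how the machinery simplifies when $k=0$: each face necessarily receives weight $0$, so a face-$0$-weighted embedding of $G_i$ is just an ordinary planar embedding, and the number $\rho$ of ways to distribute the $k$ units of weight among the faces collapses to $\rho=\binom{\omega-1}{\omega-1}=1$. Consequently the size of each layer of the DAG is governed solely by the embedding count $\pi$, and the product $\pi\cdot\rho$ in the analysis of \cref{th:fpt-non-minimal} reduces to $\pi$ with $\sigma$ replaced by $\omega$.

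I expect no genuine obstacle, since this is a corollary obtained by plugging a constant into an already-proven parameterized bound. The only point that deserves an explicit sentence is the harmless degeneration of the face-weighting argument at $k=0$ described above; once that is observed, the claimed time bound follows directly and the proof is complete.
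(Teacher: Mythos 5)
Your proposal is correct and matches the paper exactly: the paper derives \cref{co:fpt-window-size} simply by setting $k=0$ in \cref{th:fpt-non-minimal}, so that $\sigma=\omega$ and the bound becomes $O(n\cdot 2^{(4\omega+1)\log_2\omega})$. Your additional observation that the face-weight distribution count $\rho$ collapses to $1$ when $k=0$ is a harmless and accurate refinement of the same argument.
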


\begin{corollary}\label{co:completeness-non-minimal}
For any integer $k \geq 0$, testing the realizability of a graph story ${\cal S}=(G,\omega,k,\tau)$ is \NP-complete.
\end{corollary}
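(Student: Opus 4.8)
The plan is to establish the two halves of \NP-completeness separately, obtaining hardness from \cref{th:hardness-non-minimal} and membership in \NP from the machinery behind \cref{th:fpt-non-minimal}. The \NP-hardness is immediate, since \cref{th:hardness-non-minimal} already proves that testing realizability is \NP-hard for every fixed $k \geq 0$; nothing further is needed on that side. Thus the entire effort goes into exhibiting a polynomial-size certificate that is verifiable in polynomial time.

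For membership in \NP, I would rely on the characterization of \cref{le:compatibility-non-minimal}: a graph story is realizable if and only if there exists a sequence $\langle \phi_\omega, \phi_{\omega+1}, \dots, \phi_n \rangle$ of face-$k$-weighted planar embeddings of $\langle G_\omega, \dots, G_n \rangle$ in which consecutive embeddings are compatible. I would take such a sequence (equivalently, a source-to-sink path in the DAG $D$ of \cref{th:fpt-non-minimal}) as the certificate. The key observations are that the sequence has $O(n)$ entries and that each entry $\phi_i$ admits a polynomial-size description: the rotation system of $G_i$, the choice of the external face, the inclusion relationships among connected components when $G_i$ is disconnected, and the $O(\omega)$ face weights encoding a distribution of $k$ units among the faces. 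Since $G_i$ is planar on at most $\omega \le n$ vertices, it has $O(\omega)$ edges and $O(\omega)$ faces, so each description is polynomial in the input size, and hence the whole certificate is polynomial.

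The verification step then consists in checking, for each $i$, that $\phi_i$ is a valid face-$k$-weighted planar embedding of $G_i$ and that each consecutive pair $\phi_{i-1}, \phi_i$ is compatible; the latter amounts to computing the two removals $\phi_{i-1} \setminus v_{i-\omega}$ and $\phi_i \setminus v_i$ and testing that they yield the same face-$(k+1)$-weighted embedding of $G_{i-1} \cap G_i$. Each such check is a local operation on embeddings of size $O(\omega)$ and runs in polynomial time, exactly as in the edge test of the DAG $D$ in the proof of \cref{th:fpt-non-minimal}; performing all $O(n)$ checks stays polynomial. Combining \NP-hardness with this membership argument yields \NP-completeness.

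The main obstacle I anticipate is bookkeeping rather than conceptual: one must argue carefully that a face-$k$-weighted planar embedding really has a polynomial-size encoding even when $G_i$ is disconnected, so that both guessing the certificate and checking compatibility remain polynomial. This is precisely the encoding (SPQR-trees, BC-trees, and an inclusion tree for the plane components) already invoked in the complexity analysis of \cref{th:fpt-non-minimal}, so the difficulty lies only in reusing it cleanly rather than in introducing any new idea.
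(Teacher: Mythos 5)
Your proposal is correct and matches the paper's (implicit) argument: the paper derives \cref{co:completeness-non-minimal} directly from \cref{th:hardness-non-minimal} for hardness and from the machinery of \cref{th:fpt-non-minimal} for membership, with the intended certificate being exactly the sequence of compatible face-$k$-weighted embeddings (a source-to-sink path in the DAG $D$) that you describe. Your spelling-out of the polynomial-size encoding and polynomial-time compatibility checks is a faithful elaboration of what the paper leaves implicit.
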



\section{Minimal Graph Stories}\label{se:minimal}
We now turn our attention to minimal graph stories that can be realized for small values of $\omega$. If $\omega \leq 4$ every minimal graph story is easily realizable, independent of~$G$ and of $\tau$, and even if $G$ is not a planar graph (just use any predefined planar drawing of the complete graph $K_4$ as a support for each $\Gamma_i$ $(i=1, \dots, n)$). Establishing which minimal graph stories are realizable when $\omega \geq 5$ is more challenging. We show that every graph story is realizable if $G$ is outerplanar (\cref{th:outerplanar-minimal}), while if $G$ is a series-parallel graph this is not always the case, even if $\omega=5$ (\cref{le:sp-minimal}). However, we prove that stories of partial 2-trees (which include series-parallel graphs) are always realizable for $\omega=5$ if we are allowed to ``reroute'' at most one edge per time (a formal definition is given~later); this result is an implication of a more general result for stories with $\omega=5$ (\cref{th:1-reroute-planar-minimal}). \cref{le:sp-minimal} and \cref{th:1-reroute-planar-minimal} together close the gap on the realizability of minimal graph stories of partial 2-trees when $\omega=5$.
Finally, for $\omega=5$ we prove that every minimal graph story is realizable if $G$ is a planar triconnected cubic graph (\cref{th:cubic3connected-minimal}). A graph is \emph{cubic} if all its vertices have degree three. 

\smallskip For a story of an outerplanar graph, we show that any outerplanar embedding is a supporting embedding (see the appendix for details).

\begin{theorem}\label{th:outerplanar-minimal}
Every minimal graph story ${\cal S}=(G,\omega,\tau)$ with $G$ outerplanar is realizable. Also, any outerplanar embedding of $G$ is a supporting embedding for~${\cal S}$.
\end{theorem}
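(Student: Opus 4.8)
The plan is to invoke the characterization of \cref{le:compatibility-non-minimal}: for a minimal story ($k=0$) it suffices to exhibit a sequence $\langle \phi_\omega, \phi_{\omega+1}, \dots, \phi_n\rangle$ of face-$0$-weighted planar embeddings of $\langle G_\omega, \dots, G_n\rangle$ in which consecutive embeddings are compatible. I would fix \emph{any} outerplanar embedding $\phi$ of $G$ and simply set $\phi_i$ to be the restriction $\phi|_{G_i}$ of $\phi$ to $G_i$ (with all face weights equal to $0$, as required when $k=0$). Because the realization built in the proof of \cref{le:compatibility-non-minimal} from such a sequence draws each $G_i$ with embedding $\phi_i=\phi|_{G_i}$, proving that this sequence is compatible will simultaneously establish realizability and the fact that $\phi$ is a supporting embedding.

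The key structural ingredient I would prove first is that the restriction of an outerplanar embedding to an induced subgraph is again outerplanar. Since outerplanar graphs are closed under taking (induced) subgraphs, each $G_i$ is outerplanar; what I really need is the stronger statement that in $\phi_i=\phi|_{G_i}$ \emph{all} vertices of $G_i$ still lie on the external face. This follows by considering the canonical drawing of $\phi$ with all vertices of $G$ on a circle and all edges drawn as chords inside it: deleting the vertices outside $V_i$ only enlarges the outer region, so every surviving vertex remains on the boundary of the external face of $\phi_i$. In particular every $\phi_i$ is a legitimate outerplanar (hence planar) embedding, and the vertex that leaves and the vertex that enters each window both sit on the external face of their embedding.

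To verify compatibility of $\phi_{i-1}$ and $\phi_i$ (for $\omega+1\le i\le n$) I would check the two conditions in the definition of compatibility, recalling that $G_{i-1}\cap G_i=G_{i-1}\setminus v_{i-\omega}=G_i\setminus v_i$. First, the underlying planar embeddings agree: removing $v_{i-\omega}$ from $\phi_{i-1}$ and removing $v_i$ from $\phi_i$ both yield $\phi|_{G_{i-1}\cap G_i}$, because restriction is transitive, $(\phi|_{G_{i-1}})|_{G_{i-1}\cap G_i}=\phi|_{G_{i-1}\cap G_i}=(\phi|_{G_i})|_{G_{i-1}\cap G_i}$, and, as argued above, the common external face is preserved. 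Second, the face weights agree: removing a vertex produces a single weight-$1$ face obtained by merging all faces incident to that vertex, and since both $v_{i-\omega}$ (in $\phi_{i-1}$) and $v_i$ (in $\phi_i$) lie on the external face, in each case the merged weight-$1$ face is exactly the external face of $\phi|_{G_{i-1}\cap G_i}$. Hence the two removals produce identical face-$1$-weighted embeddings, so $\phi_{i-1}$ and $\phi_i$ are compatible.

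I expect the weight-matching step to be the real obstacle: it is the only place where outerplanarity is genuinely used, and it is precisely what can fail for general planar graphs. If a window contained a vertex drawn on an internal face, then removing the leaving vertex and inserting the entering one could merge different sets of faces, so the weight-$1$ face of the two removals need not coincide; this is exactly the phenomenon behind \cref{le:supporting-embedding}. Outerplanarity forces every leaving and entering vertex onto the external face, making the merged face canonical (always the external face) and thus guaranteeing compatibility along the whole sequence. Applying \cref{le:compatibility-non-minimal} to this compatible sequence then yields a realization of $\cal S$ on $\omega$ points in which every $\Gamma_i$ preserves $\phi|_{G_i}$, proving both that $\cal S$ is realizable and that the chosen (arbitrary) outerplanar embedding $\phi$ is a supporting embedding.
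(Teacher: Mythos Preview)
Your proposal is correct and follows essentially the same approach as the paper: fix an outerplanar embedding $\phi$, let $\phi_i=\phi|_{G_i}$, and verify compatibility of consecutive $\phi_{i-1},\phi_i$ via \cref{le:compatibility-non-minimal} by observing that both the leaving vertex $v_{i-\omega}$ and the entering vertex $v_i$ lie on the external face, so both removals produce the same face-$1$-weighted embedding of $G_{i-1}\cap G_i$. Your write-up is simply more explicit than the paper's (the circle-with-chords justification for why restrictions remain outerplanar, and the spelled-out weight check), but the argument is the same.
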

\begin{proof}
Let $\phi$ be any outerplanar embedding of $G$, and let $\phi_i$ be the restriction of $\phi$ to $G_i$ $(1 \leq i \leq n)$. Consider any two consecutive planar embeddings $\phi_{i-1}$ and $\phi_i$, for $\omega+1 \leq i \leq n$. Since they are restrictions of the same planar embedding of $G$, then their restrictions to $G_i \cap G_{i-1}$ determine the same set~$F$ of faces. Also, both $v_i$ and $v_{i-\omega}$ lie in the plane region corresponding to the external face of~$F$. Hence, $\phi_{i-1}$ and $\phi_i$ are compatible and, by \cref{le:compatibility-non-minimal}, ${\cal S}$ is realizable.
\end{proof}

\newlength{\tittoheight}
\setlength{\tittoheight}{0.165\columnwidth}

\begin{figure}[tb]
	\centering
	\subfigure[]{\label{fi:sp-unrealizable-story-1a}\includegraphics[page=1,height=\tittoheight]{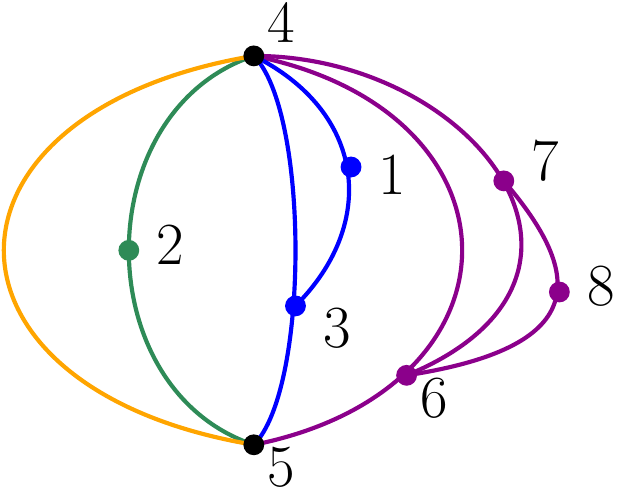}}
	\hfil
	\subfigure[]{\label{fi:sp-unrealizable-story-1b}\includegraphics[page=2,height=\tittoheight]{titto-proof}}
	\hfil
    \subfigure[]{\label{fi:sp-unrealizable-story-1c}\includegraphics[page=5,height=\tittoheight]{titto-proof}}
    \hfil
	\subfigure[]{\label{fi:sp-unrealizable-story-1d}\includegraphics[page=13,height=\tittoheight]{titto-proof}}
	    \hfil
	\subfigure[]{\label{fi:sp-unrealizable-story-1e}\includegraphics[page=9,height=\tittoheight]{titto-proof}}
	\hfil
	\caption{(a) A minimal graph story of a series-parallel graph that is not realizable. (b),(c),(d),(e) the four combinatorial
	embeddings of $G_5$.}\label{fi:sp-unrealizable-story-1}
\end{figure}

\setlength{\tittoheight}{0.205\columnwidth}
\begin{figure}[tbp]
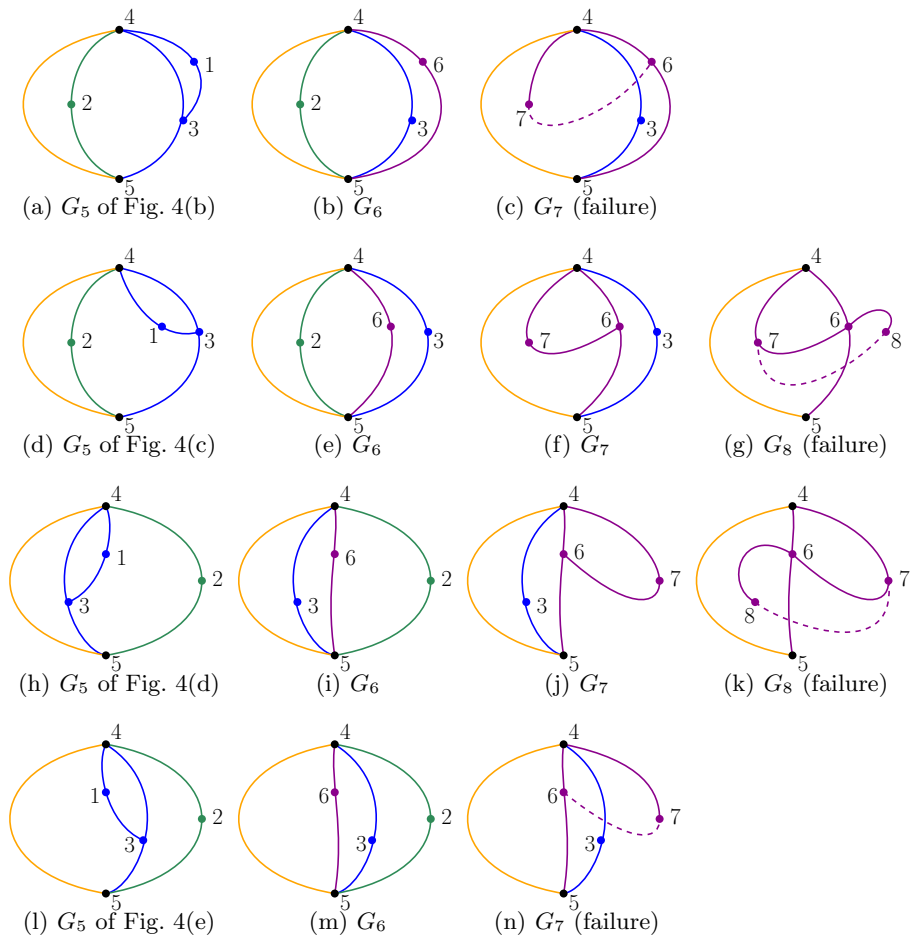

	\centering
	\begin{tabular}{c c c c}
    \subfigure[$G_5$ of~\cref{fi:sp-unrealizable-story-1b}]{\label{fi:sp-unrealizable-story-b}\includegraphics[page=2,height=\tittoheight]{titto-proof}}
	&
	\subfigure[$G_6$]{\label{fi:sp-unrealizable-story-c}\includegraphics[page=3,height=\tittoheight]{titto-proof}}
	&
	\subfigure[$G_7$ (failure)]{\label{fi:sp-unrealizable-story-d}\includegraphics[page=4,height=\tittoheight]{titto-proof}}
	\\
	\subfigure[$G_5$ of~\cref{fi:sp-unrealizable-story-1c}]{\label{fi:sp-unrealizable-story-e}\includegraphics[page=5,height=\tittoheight]{titto-proof}}
	&
	\subfigure[$G_6$]{\label{fi:sp-unrealizable-story-f}\includegraphics[page=6,height=\tittoheight]{titto-proof}}
	&
	\subfigure[$G_7$]{\label{fi:sp-unrealizable-story-g}\includegraphics[page=7,height=\tittoheight]{titto-proof}}
    &
	\subfigure[$G_8$ (failure)]{\label{fi:sp-unrealizable-story-h}\includegraphics[page=8,height=\tittoheight]{titto-proof}}
	\\
	\subfigure[$G_5$ of~\cref{fi:sp-unrealizable-story-1d}]{\label{fi:sp-unrealizable-story-i}\includegraphics[page=9,height=\tittoheight]{titto-proof}}
	&
	\subfigure[$G_6$]{\label{fi:sp-unrealizable-story-j}\includegraphics[page=10,height=\tittoheight]{titto-proof}}
	&
	\subfigure[$G_7$]{\label{fi:sp-unrealizable-story-k}\includegraphics[page=11,height=\tittoheight]{titto-proof}}
	&
	\subfigure[$G_8$ (failure)]{\label{fi:sp-unrealizable-story-l}\includegraphics[page=12,height=\tittoheight]{titto-proof}}
	\\
	\subfigure[$G_5$ of~\cref{fi:sp-unrealizable-story-1e}]{\label{fi:sp-unrealizable-story-m}\includegraphics[page=13,height=\tittoheight]{titto-proof}}
	&
	\subfigure[$G_6$]{\label{fi:sp-unrealizable-story-n}\includegraphics[page=14,height=\tittoheight]{titto-proof}}
	&
	\subfigure[$G_7$ (failure)]{\label{fi:sp-unrealizable-story-o}\includegraphics[page=15,height=\tittoheight]{titto-proof}}
	\end{tabular}
	\caption{Tentative realizations of the story of \cref{fi:sp-unrealizable-story-1a} starting from the embeddings of \cref{fi:sp-unrealizable-story-1b,fi:sp-unrealizable-story-1c,fi:sp-unrealizable-story-1d,fi:sp-unrealizable-story-1e}. They all lead to a failure.}\label{fi:sp-unrealizable-story}
\end{figure}

\begin{lemma2rep}\label{le:sp-minimal}
For any $\omega \geq 5$, there exists a minimal graph story ${\cal S}=(G,\omega,\tau)$  such that $G$ is a series-parallel graph and ${\cal S}$ is not realizable. 
\end{lemma2rep}
\begin{proofsketch}
Consider the story $(G,5,\tau)$ in \cref{fi:sp-unrealizable-story-1a}, where the vertices are labeled with their subscript in the sequence $\tau = \langle v_1, v_2, \dots, v_8 \rangle$. 
Graph~$G_5$ admits one of the four embeddings in \cref{fi:sp-unrealizable-story-1b,fi:sp-unrealizable-story-1c,fi:sp-unrealizable-story-1d,fi:sp-unrealizable-story-1e}. Observe that, in all four cases either cycle $3,4,5$ separates $6$ from $7$ in $G_7$ (\cref{fi:sp-unrealizable-story-d,fi:sp-unrealizable-story-o}), or cycle $4,5,6$ separates $7$ from $8$ in $G_8$ (\cref{fi:sp-unrealizable-story-h,fi:sp-unrealizable-story-l}).
See the appendix for $\omega>5$.
\end{proofsketch}
\begin{appendixproof}
We first prove the statement for $\omega = 5$, and then extend the result to any $\omega > 5$.
Consider the instance ${\cal S}=(G,5,\tau)$ in \cref{fi:sp-unrealizable-story-1a}, where the vertices are labeled with their subscript in the sequence $\tau = \langle v_1, v_2, \dots, v_8 \rangle$. 
Graph~$G_5$ admits one of the four embeddings in \cref{fi:sp-unrealizable-story-1b,fi:sp-unrealizable-story-1c,fi:sp-unrealizable-story-1d,fi:sp-unrealizable-story-1e}. Observe that, in all four cases either cycle $3,4,5$ separates $6$ from $7$ in $G_7$ (\cref{fi:sp-unrealizable-story-d,fi:sp-unrealizable-story-o}), or cycle $4,5,6$ separates $7$ from $8$ in $G_8$ (\cref{fi:sp-unrealizable-story-h,fi:sp-unrealizable-story-l}).

To extend the result to any $\omega > 5$, we modify the above described instance. Consider the instance ${\cal S'}=(G',\omega',\tau')$, where $\omega' > 5$, $G'$ is obtained from $G$ by adding $\omega-5$ isolated vertices, and $\tau' = \langle v'_1, v'_2, \dots,$ $v'_{8+\omega-5} \rangle$ is such that for $p=1,2,\dots,5$, we have $v'_{p} = v_{p}$; for $q = 6,7,\dots, \omega$ we have that $v'_{q}$ is an isolated vertex of $G'$; and for $r = 1,2,3$ we have $v'_{\omega+1} = v_{5+r}$. 
Observe that $G'_5 = G_5$. From $G'_6$ to $G'_\omega$ the isolated vertices $v'_6, \dots v'_\omega$ are added to $G'_5$. Neglecting isolated vertices, for $r=1,2,3$, graph $G'_{\omega+r} = G_r$ and the same non-planarity configurations of the graph story ${\cal S}$ occur. 
\end{appendixproof}

\begin{figure}[tb]
	\centering
	\subfigure[]{\label{fi:rerouting-a}\includegraphics[page=1,width=0.18\columnwidth]{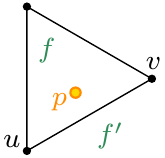}}
	\hfil
	\subfigure[]{\label{fi:rerouting-b}\includegraphics[page=2,width=0.18\columnwidth]{rerouting}}
	\caption{Rerouting edge $(u,v)$ with respect to point $p$.}\label{fi:rerouting}
\end{figure}

Since Property~\textsf{R2} of \cref{df:realization} is a strict requirement, one can think of relaxing it by allowing a partial change of the drawing of $G_{i-1} \cap G_i$ when vertex $v_i$ enters the viewing window. 
Let $\Gamma$ be a planar drawing of $G$, $(u,v)$ be an edge of $G$ incident to two distinct faces $f$ and $f'$ of $\Gamma$, and $p$ be a point of the plane inside face $f$; see \cref{fi:rerouting-a}. \emph{Rerouting $(u,v)$ with respect to $p$} consists of planarly redrawing $(u,v)$ such that $u$ and $v$ keep their positions and $p$ lies inside $f'$; see \cref{fi:rerouting-b}. The obtained drawing has the same planar embedding as $\Gamma$.

\noindent An \emph{h-reroute realization} of ${\cal S} = (G, \omega, k, \tau)$ \emph{on~a set} $P$ of $\omega+k$ points ($h \geq 0$) is a sequence $\langle \Gamma_1, \Gamma_2, \dots, \Gamma_n \rangle$ satisfying Property~\textsf{R1} of \cref{df:realization} and such that the restriction of $\Gamma_i$ to $G_{i-1} \cap G_i$ $(2 \leq i \leq n)$ is obtained from the restriction~of~$\Gamma_{i-1}$ to $G_{i-1} \cap G_i$ by rerouting at most $h$ distinct edges with respect to $h$ points of $P$.
\noindent ${\cal S}$ is \emph{h-reroute realizable} if it has an $h$-reroute realization on a set of $\omega+k$~points.

The next theorem characterizes the set of graph stories ${\cal S}=(G,5,\tau)$ that are 1-reroute realizable. It properly includes those stories whose $G$ is planar.

\begin{theorem2rep}\label{th:1-reroute-planar-minimal}
Every minimal graph story ${\cal S}=(G,5,\tau)$ is 1-reroute realizable if and only if $G$ does not contain $K_5$.
\end{theorem2rep}
\begin{proofsketch}
%
%
We only sketch here the proof of one of the two directions. Suppose that $G$ does not contain $K_5$.
Let $\Gamma_4$ be a planar drawing of~$G_4$ on~$P$.
Let $p$ be the point of $P$ to which no vertex of $G_4$ is mapped, let $f$ be the face of $\Gamma_4$ that contains $p$, and let $N(v_5)$ be the set of neighbors of $v_5$ in $G_5$.
%
If the boundary of $f$ has four vertices, then $v_5$ can be mapped to $p$ and it can be connected to all its neighbors without creating edge crossings, so to obtain a planar drawing $\Gamma_5$ of~$G_5$.
If the boundary of $f$ has three vertices, mapping $v_5$ to $p$ and connecting it to its neighbors may create an edge crossing. To avoid this crossing, it is possible to reroute an edge of the boundary of $f$ with respect to $p$ such that $p$ lies inside a face whose boundary contains all vertices in $N(v_5)$. Such an edge always exists because the faces of $\Gamma_4$ are pairwise adjacent. More precisely, if $G_4$ is not $K_4$, then there is a face $f'$ of $\Gamma_4$ (adjacent to $f$) that contains all vertices of $G_4$.
If $G_4$ is $K_4$, then $|N(v_5)| \le 3$, as $G$ does not contain~$K_5$. Also, there is a face $f'$ of $\Gamma_4$ that contains all vertices of $N(v_5)$. In both cases, we can reroute any edge $e$ shared by $f$ and $f'$ so that $p$ lies inside $f'$.
%
This procedure can be applied for each pair of graphs $G_{i-1}$ and $G_i$ ($5 < i \le n$): $\Gamma_i$ is obtained by mapping $v_i$ to the same point $p$ of $P$ to which $v_{i-5}$ is mapped in $\Gamma_{i-1}$, by rerouting at most one edge with respect to $p$.
\end{proofsketch}
\begin{appendixproof}
Suppose first that $G$ contains $K_5$. Recall that, by assumption, all the edges of $G$ are visible, i.e., each edge appears in some $G_i$ $(1 \leq i \leq n)$. This implies that there must be an index $j \in \{1, \dots, n\}$ for which $G_j$ contains all vertices of the $K_5$, i.e., $G_j$ coincides with $K_5$. Indeed, if there were a pair of vertices of the $K_5$ that never appear in the same $G_i$, the edge connecting these two vertices would not be visible throughout any realization of the graph story. Therefore, since $G_j$ is non-planar, $G$ is not $h$-reroute realizable, for any $h \geq 0$.

Suppose vice versa that $G$ does not contain $K_5$. This implies that each subgraph $G_i$ $(1 \leq i \leq n)$ does not contain $K_5$ and hence it is planar. To prove that $\cal S$ is 1-reroute realizable, we show that it admits a 1-reroute realization on any arbitrarily chosen set $P$ of 5 points. 
Let $\Gamma_4$ be a planar drawing of~$G_4$ on~$P$.
Let $p$ be the point of $P$ to which no vertex of $G_4$ is mapped, let $f$ be the face of $\Gamma_4$ that contains $p$, and let $N(v_5)$ be the set of neighbors of $v_5$ in $G_5$.
%
If the boundary of $f$ has four vertices, then $v_5$ can be mapped to $p$ and it can be connected to all its neighbors without creating edge crossings, so to obtain a planar drawing $\Gamma_5$ of~$G_5$.
If the boundary of $f$ has three vertices, mapping $v_5$ to $p$ and connecting it to its neighbors may create an edge crossing. To avoid this crossing, it is possible to reroute an edge of the boundary of $f$ with respect to $p$ such that $p$ lies inside a face whose boundary contains all vertices in $N(v_5)$. Such an edge always exists because the faces of $\Gamma_4$ are pairwise adjacent. More precisely, if $G_4$ is not $K_4$, then there must be a face $f'$ of $\Gamma_4$ (adjacent to $f$) such that $f'$ contains all vertices of $G_4$; in this case, we can reroute any edge $e$ shared by $f$ and $f'$ so that $p$ lies inside $f'$. If $G_4$ is $K_4$, then $|N(v_5)| \le 3$, as $G$ does not contain~$K_5$. Also, there is a face $f'$ of $\Gamma_4$ that contains all vertices of $N(v_5)$; as before, we can reroute any edge $e$ shared by $f$ and $f'$ so that $p$ lies inside $f'$.
After the rerouting operation, $v_5$ can be mapped to $p$ and connected to all its neighbors without creating edge crossings, so to obtain a planar drawing $\Gamma_5$ of~$G_5$.
This procedure can be applied for each pair of graphs $G_{i-1}$ and $G_i$ ($5 < i \le n$): $\Gamma_i$ is obtained by mapping $v_i$ to the same point $p$ of $P$ to which $v_{i-5}$ is mapped in $\Gamma_{i-1}$, by rerouting at most one edge with respect to $p$.
\end{appendixproof}

The proof of the next theorem is rather technical and can be found in the appendix. It relies on constructing a non-planar embedding $\phi$ of $G$ (with dummy vertices replacing crossings) such that there exists a realization $\langle \Gamma_1, \dots, \Gamma_n \rangle$ of $\cal S$ where the planar embedding of $\Gamma_i$ is the restriction of $\phi$ to $G_i$ ($i = 1, \dots, n$).

\begin{theorem2rep}\label{th:cubic3connected-minimal}
Every minimal graph story ${\cal S}=(G,5,\tau)$ such that $G$ is an $n$-vertex planar triconnected cubic graph is realizable. A sequence of compatible planar embeddings for ${\cal S}$ can be found in $O(n)$ time.
\end{theorem2rep}
\begin{appendixproof}
If $n\le 5$ the proof is trivial. Suppose $n \ge 6$.
We start by introducing some notation. Two vertices, a vertex and an edge, or two edges are \emph{coeval} if there is a graph $G_i$, for $1 \le i\le n$, containing both of them. By hypothesis every two adjacent vertices of $G$ are coeval. 


In this proof we consider embeddings of $G$ in the plane. Since $G$ is triconnected, an embedding of $G$ (in the plane) is simply defined by the choice of the external face. In what follows, with the term ``embedding'' we also consider non-planar embeddings, where we interpret each crossing as a dummy vertex. When there is no crossing, we talk about planar embedding.

Given an embedding $\phi$ of $G$ and a cycle $C$ of $G$, we denote by $G(C)$ the union of the subgraph of $G$ that lies inside $C$ and $C$.
A \emph{critical cycle} of $G$ in $\phi$ is a cycle $C$ such that there exists a vertex $v$ that is coeval with $C$ and $v \in G(C) \setminus C$. In this case we say that $C$ is critical for $v$. See, for example, cycles $C_1$ and $C_2$ in \cref{fi:cubic_graph-b}. A \emph{good embedding} of $G$ is an embedding with no critical cycle and where two coeval edges do not cross. See \cref{fi:cubic_graph-b,fi:cubic_graph-c}. 

In order to prove the theorem, we use \cref{le:sufficient_cond,le:nocriticalinternal}: \cref{le:sufficient_cond} shows that a sufficient condition for story ${\cal S}=(G,5,\tau)$ to be realizable is that $G$ admits a good embedding; \cref{le:nocriticalinternal} shows that a planar triconnected cubic graph always admits a good embedding.

\begin{figure}[tb]
	\centering
	\hfil
	\subfigure[]{\label{fi:cubic_graph-a}}\includegraphics[width=0.3\columnwidth,page=1]{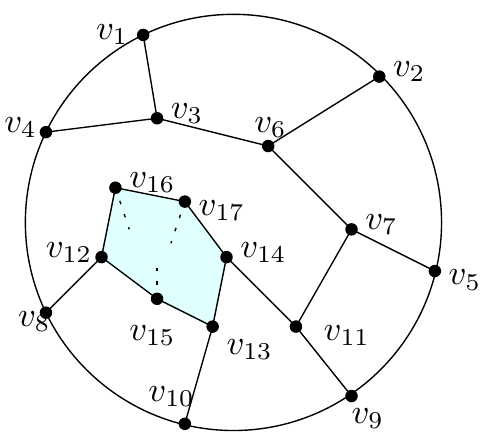}
	\hfil
	\subfigure[]{\label{fi:cubic_graph-b}}\includegraphics[width=0.3\columnwidth,page=2]{good_placement.pdf}
	\hfil
	\subfigure[]{\label{fi:cubic_graph-c}}\includegraphics[width=0.3\columnwidth,page=3]{good_placement.pdf}
	\caption{(a)~A planar triconnected cubic graph. Notice that it can be extended in the highlighted face. (b)~A planar embedding of $G$ that is not a good embedding, since it contains two critical cycles $C_1$ and $C_2$. (c)~A good embedding of $G$.}
	\label{fi:cubic_graph}
\end{figure}
\begin{lemma}
	\label{le:sufficient_cond}
	If $G$ admits a good embedding $\phi$, then ${\cal S}$ is realizable.
\end{lemma}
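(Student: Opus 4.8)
The plan is to read a realization off the good embedding purely combinatorially, by exhibiting a sequence of pairwise compatible embeddings and then invoking \cref{le:compatibility-non-minimal}. For each $i \in \{\omega, \dots, n\}$ I would take $\phi_i$ to be the restriction of $\phi$ to $G_i$. All edges of $G_i$ occur together in $G_i$, hence are pairwise coeval; since $\phi$ is a good embedding, no two coeval edges cross, so each $\phi_i$ is a genuine planar embedding of $G_i$ (a face-$0$-weighted embedding, as $k=0$). Note that this step uses neither cubicity nor triconnectivity of $G$, matching the generality of the statement. It then remains to prove that $\phi_{i-1}$ and $\phi_i$ are compatible for every $i$ with $\omega < i \le n$.

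Let $H = G_{i-1} \cap G_i$ and let $\psi$ be the restriction of $\phi$ to $H$. Because restriction is transitive, deleting $v_{i-\omega}$ from $\phi_{i-1}$ and deleting $v_i$ from $\phi_i$ both yield $\psi$ as the underlying planar embedding of $H$. Thus the two face-$1$-weighted embeddings produced by these deletions already agree on their underlying embedding, and compatibility reduces to checking that the single ``vacated'' weighted face coincides. By the definition of the removal operation, the weighted face coming from $\phi_{i-1}$ is the face of $\psi$ in which $v_{i-\omega}$ sits in the drawing induced by $\phi$, and the one coming from $\phi_i$ is the face of $\psi$ in which $v_i$ sits. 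Hence it suffices to show that, in the drawing of $H$ induced by $\phi$, the vertices $v_{i-\omega}$ and $v_i$ lie in the same face of $\psi$.

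This is the step where the absence of critical cycles is used, and I expect it to be the crux. Suppose for contradiction that $v_{i-\omega}$ and $v_i$ lie in distinct faces of the planar drawing of $H$. Two points in different faces of a planar drawing are separated by some cycle of the graph, so there is a cycle $C$ of $H$ with exactly one of $v_{i-\omega}, v_i$ lying in $G(C) \setminus C$ and the other lying outside $C$. In either case the enclosed vertex is coeval with $C$: the window $G_i$ contains $C \subseteq H$ together with $v_i$, and the window $G_{i-1}$ contains $C \subseteq H$ together with $v_{i-\omega}$. Therefore $C$ is a critical cycle of $\phi$, contradicting that $\phi$ is good. Consequently $v_{i-\omega}$ and $v_i$ share a face of $\psi$, the two vacated faces coincide, and $\phi_{i-1}$ and $\phi_i$ are compatible.

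Having checked that $\langle \phi_\omega, \dots, \phi_n \rangle$ is a sequence of pairwise compatible face-$0$-weighted embeddings, \cref{le:compatibility-non-minimal} would immediately give that $\cal S$ is realizable. The only genuinely delicate point is the planarity fact invoked above, namely that two points in different faces of a planar drawing are separated by a cycle; this needs a little care when $H$ is disconnected or contains bridges or cut-vertices. Since a bridge never separates the plane, any such separation is realized by a cycle contained in a single block of $H$, and the critical-cycle contradiction goes through unchanged.
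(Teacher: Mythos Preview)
Your argument is correct. You restrict the good embedding $\phi$ to each $G_i$, observe that pairwise-coeval edges do not cross so each $\phi_i$ is planar, and then verify compatibility of $\phi_{i-1}$ and $\phi_i$ directly: if the leaving vertex $v_{i-\omega}$ and the entering vertex $v_i$ landed in different faces of the common restriction $\psi$, a separating cycle $C\subseteq H=G_{i-1}\cap G_i$ would exist, and since $C$ together with the enclosed vertex lies in a single window ($G_{i-1}$ or $G_i$), $C$ would be critical in $\phi$, contradicting goodness. The topological fact you invoke---two points in distinct faces of a plane graph are separated by some cycle of that graph---is standard, and your remark about bridges handles the non-2-connected case.

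The paper proceeds differently. Instead of comparing two consecutive windows, it fixes a vertex $v_i$ and looks at the union $G(v_i)=G_i\cup\cdots\cup G_{i+\omega-1}$ of all windows containing $v_i$; every vertex of $G(v_i)$ is coeval with $v_i$, so the absence of critical cycles forces $v_i$ onto the external face of (the embedding induced on) $G(v_i)$, hence onto the external face of every $G_j$ that contains it. This yields the stronger intermediate conclusion that each $\phi_j$ is \emph{outerplane}, after which the paper simply invokes the argument of \cref{th:outerplanar-minimal}. Your route is more direct and tailored to the compatibility criterion of \cref{le:compatibility-non-minimal}; the paper's route extracts an additional structural fact (outerplanarity of every window) along the way, at the cost of the extra step through the outerplanar theorem. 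Both end up relying on \cref{le:compatibility-non-minimal}.
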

\begin{claimproof} Let $\phi_i$ be the restriction of $\phi$ to the subgraph $G_i$ ($1\le i\le n$). Consider $G$ embedded with $\phi$ and consider any $G_i$, for $1 \le i\le n$, embedded with $\phi_i$. For any $v_i$, let $G(v_i)=G_i\cup G_{i+1} \cup G_{i+2} \cup G_{i+3}\cup G_{i+4}$ and let $\phi(v_i)$ be the restriction of $\phi$ to the subgraph $G(v_i)$. Since every $G_i$ is embedded with $\phi_i$, $G(v_i)$ is embedded with $\phi(v_i)$. Notice that $G(v_i)$ may be not plane, while, since $\phi$ is a good embedding, every $G_j$, for $1\le j\le n$, is plane. Since $G$ has no critical cycle, we have that $v_i$ is incident to the external face of $G(v_i)$, which contains the external face of $G$. Hence, $v_i$ is incident to the external face of any $G_j$ for $i\le j\le i+4$. Since this holds for any $v_i$ with $1\le i \le n$, we have that any $G_j$, for $1\le j\le n$, is outerplanar. Hence, the proof follows by the same argument used in \cref{th:outerplanar-minimal}.
\end{claimproof}

Our goal now is to prove \cref{le:nocriticalinternal}. In order to do that we need intermediate results, stated by the following four claims. A critical cycle is \emph{internal} if it is not the external cycle (i.e., the cycle incident to the external face). \cref{clm:cubic_external_face} shows how to construct an embedding of $G$ having only internal critical cycles and where the crossings have specific properties.
\begin{clm}
\label{clm:cubic_external_face}
There always exists an embedding $\phi'$ of $G$ that satisfies the following properties:~$(i)$ The external cycle
is not a critical cycle;~$(ii)$ every two crossing edges are not coeval and one of them is an edge $e^*$ incident to $v_1$.
\end{clm}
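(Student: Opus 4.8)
The plan is to exploit that, since $G$ is triconnected, its planar embedding is unique up to a reflection and the choice of the external face (and, once we allow crossings, an embedding is fixed by this choice together with the routing of the individual edges). The whole argument hinges on the special status of $v_1$: being the first vertex of $\tau$, it is coeval only with $v_2,v_3,v_4,v_5$. Two elementary consequences drive everything. First, a vertex $v$ can be coeval with a cycle $C$ only if $C\cup\{v\}$ fits inside a single window, so a critical cycle has at most four vertices. Second, if $C$ contains two vertices whose positions in $\tau$ differ by at least $\omega=5$, then no window contains all of $C$, hence no vertex is coeval with $C$ and $C$ is not critical. I will use this second fact as the target condition for the external cycle, and the fact that $v_j$ with $j\ge 6$ is never coeval with a cycle through $v_1$.

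First I would try to realize property $(i)$ with a purely planar embedding, by choosing the external face well. If $G$ has a face bounded by at least five vertices, I make it the external face: then $|C_0|\ge 5$, no vertex can be coeval with all of $C_0$, so $C_0$ is not critical, the embedding is planar, property $(ii)$ holds vacuously, and I let $e^*$ be any edge incident to $v_1$. By Euler's formula a cubic triconnected planar graph all of whose faces are bounded by at most four vertices must have $n\le 8$, so this simple choice already settles every graph with $n\ge 9$. Otherwise I look among the (at most three) faces incident to $v_1$ for one whose boundary contains a vertex of index at least $6$; choosing it as the external face makes $C_0$ contain both $v_1$ and a ``late'' vertex, again forcing $C_0$ to be non-critical. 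In the remaining, genuinely degenerate situation the candidate external cycle lies entirely in $\{v_1,\dots,v_5\}$, it is short ($3$ or $4$ vertices), and the only possible threat is an early vertex of $\{v_2,\dots,v_5\}$ enclosed by $C_0$, since every $v_j$ with $j\ge 6$ fails to be coeval with $C_0\ni v_1$. Here I would reroute a single edge $e^*$ incident to $v_1$, sweeping it across the graph toward the region of the late vertices so that a vertex of index at least $6$ is brought onto the external boundary. Because $e^*$ joins $v_1$ to one of its neighbours (all of index at most $5$) while the edges it must cross are incident to vertices of index at least $6$, no edge crossed by $e^*$ is coeval with it; all crossings of $\phi'$ lie on $e^*$, which gives $(ii)$, and the enlarged external cycle now spans indices more than $\omega-1$ apart, which gives $(i)$.

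The step I expect to be the main obstacle is this last rerouting together with the verification that the dichotomy above is exhaustive. I must prove, using the triconnectivity of $G$, that whenever $v_1$ is locally surrounded by early edges there is still a curve realizing $e^*$ that reaches the late part of $G$ while crossing only non-coeval edges, and that after the reroute the external cycle really does acquire a vertex of index at least $6$. The natural way to argue is to follow the faces separating $v_1$ from the late region and to show that $e^*$ can be pushed across exactly the separating edges, all of which are incident to late vertices, triconnectivity ruling out a separating pair that would block every admissible curve. Since the delicate reroute is only ever needed for the finitely many structures with $n\le 8$ in which every face is bounded by at most four vertices, I would complete this part by a direct check of those bounded configurations, confirming in each case that a suitable edge $e^*$ of $v_1$ and a non-coeval rerouting exist.
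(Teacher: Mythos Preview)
Your argument rests on a misreading of ``coeval with $C$''. In the paper, two objects are coeval if some single window $G_i$ contains both, but ``$v$ is coeval with $C$'' is meant \emph{pairwise}: $v$ must be coeval with every vertex of $C$, not that $C\cup\{v\}$ sits in one window. This is explicit in \cref{cl:critical_cardinality}, which bounds $|C|\le 8$ (not $4$) precisely because a vertex has up to $8$ pairwise-coeval neighbours. Consequently both of your ``elementary consequences'' fail: a critical cycle can have up to eight vertices, and a cycle containing $v_1$ and some $v_j$ with $j\ge 6$ can still be critical (e.g.\ $v_3$ is coeval with every vertex in $\{v_1,v_2,v_4,v_5,v_6,v_7\}$). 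So neither ``pick a face of size $\ge 5$'' nor ``make the external cycle span indices more than $\omega-1$ apart'' guarantees~$(i)$; your $n\ge 9$ case and your rerouting target are both aimed at the wrong condition.

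What the paper actually shows is the correct target: arrange that \emph{all} of $v_1,\dots,v_5$ lie on the external cycle. Then every interior vertex has index $\ge 6$, hence is not coeval with $v_1\in C_0$, so $C_0$ cannot be critical. To achieve this the paper first argues (by a short case analysis on whether $(v_1,v_2)\in E$) that some face $f$ incident to $v_1$ already contains four of $v_1,\dots,v_5$; choosing $f$ external leaves exactly one of them, say $v_c$, strictly inside. Triconnectivity then yields a path $\pi$ from $v_c$ to the external cycle consisting only of edges not coeval with $v_1$, and rerouting the single edge $e^*=(v_1,v_a)$ across $\pi$ pulls $v_c$ onto the boundary. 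Your intuition that one edge of $v_1$ may need to be rerouted across non-coeval edges is right, but the goal of the reroute is to bring the last of $v_1,\dots,v_5$ outside, not to add a late vertex to~$C_0$.
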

\begin{claimproof}
%
%
First, observe that since $G$ is cubic triconnected and since every edge is incident to two coeval vertices, there is no face of $G$ incident to the vertices in the set $\{v_1,v_2,v_3,v_4,v_5\}$.

We first show that  there exists a face $f$ of $G$ incident to $v_1$ and to at least three vertices of the set $\{v_2,v_3,v_4,v_5\}$. Suppose that $v_1$ is not adjacent to $v_2$. Refer to \cref{fi:cubic-external-face-a}. In this case $v_1$ is adjacent to $v_3$, $v_4$, and $v_5$. Also, $v_2$ has to be adjacent to at least two vertices $v_i$ and $v_j$ among $v_3$, $v_4$, and $v_5$. Since in a triconnected cubic plane graph different from $K_4$ any 4-cycle is a face, there exists a face formed by the edges $(v_1,v_i)$, $(v_i,v_2)$, $(v_2,v_j)$, and $(v_j,v_1)$. Suppose now that $v_1$ is adjacent to $v_2$. Refer to \cref{fi:cubic-external-face-b}. There exist $i,i',i''\in \{3,4,5\}$ such that $v_1$ is adjacent to $v_i$ and $v_{i'}$; $v_{i''}$ is adjacent to $v_2$. Since $G$ is cubic: There exists a face $f$ incident to the path $\{v_{i''}, v_2,v_1\}$; either $(v_1,v_{i})$ or $(v_1,v_{i'})$ is incident to~$f$.

\begin{figure}[htb]
	\centering
	\subfigure[]{\label{fi:cubic-external-face-a}
	\includegraphics[width=0.25\columnwidth]{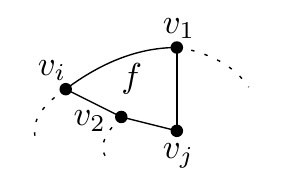}}
	\hfil
	\subfigure[]{\label{fi:cubic-external-face-b}
	\includegraphics[width=0.25\columnwidth,page=2]{external_not_critical.pdf}}
	\hfil
	\subfigure[]{\label{fi:cubic-external-face-c}
	\includegraphics[width=0.25\columnwidth,page=3]{external_not_critical.pdf}}
	\caption{Illustration for \cref{clm:cubic_external_face}.}
	\label{fi:cubic-external-face}
\end{figure}

Consider the planar embedding $\phi$ of $G$ when $f$ is the external face of $G$.  Let $a,b,c,d\in \{2,3,4,5\}$ and assume that $v_a$, $v_b$, and $v_c$ are adjacent to $v_1$. Refer to \cref{fi:cubic-external-face-c}.
We have that two among these vertices, say $v_a$ and $v_b$, are in the external face. Hence, $v_c$ is not in the external face, otherwise $v_1, v_c$ would have been a separation pair. Vertex $v_d$ is in the external face due to the choice of $f$. If $(v_c,v_a)\in E$ and $(v_c,v_b)\in E$, $v_a,v_b$ is a separation pair. Hence, there exists a path $\pi$ connecting $v_c$ to either $v_a$ or $v_b$ containing only edges not coeval with $v_1$.
Say that $\pi$ connects $v_c$ to $v_b$. Notice that since $(v_1,v_c)\in E$ and $e^*=(v_1,v_a)\in E$ is incident to $f$ and to an internal face incident to $v_c$. We can change $\phi$ so that $e^*$ crosses an edge in $\pi$ and $v_c$ is incident to the the external face. We call $\phi'$ the new embedding.

 Since $v_i$ for $i=1,\dots, 5$ is part of the external cycle, every internal vertex of $G$ is not coeval with $v_1$ and, consequently, Property~$(i)$ holds. Since we changed the embedding only with respect to $e^*$, Property~$(ii)$ holds for $\phi'$.
%
\end{claimproof}

The next claims prove relevant properties of the internal critical cycles. 
\begin{clm}
	\label{cl:critical_cardinality}
	For every internal critical cycle $C$ of $G$, $6 \le |C| \le 8$.
\end{clm}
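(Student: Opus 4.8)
The plan is to prove the two inequalities separately: the upper bound $|C|\le 8$ is an almost immediate consequence of the coeval hypothesis, while the lower bound $|C|\ge 6$ is the combinatorial core and should follow from the fact that $G$ is cubic and triconnected, hence $3$-edge-connected (for cubic graphs, $3$-vertex-connectivity and $3$-edge-connectivity coincide).

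For the upper bound, let $v=v_t$ be a vertex witnessing that $C$ is critical, so that $v\in G(C)\setminus C$ and $v$ is coeval with $C$. Since $\omega=5$, two vertices are coeval exactly when their indices in $\tau$ differ by at most $4$. Reading ``$v$ coeval with $C$'' as ``$v$ is coeval with every vertex of $C$'', every vertex of $C$ then has index in the range $\{t-4,\dots,t+4\}$, a set of $9$ indices; as $v_t$ lies strictly inside $C$ and hence is not on $C$, at most $8$ of these indices can belong to $C$, giving $|C|\le 8$.

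For the lower bound I would argue by means of two edge cuts. The cycle $C$ is a closed curve partitioning the plane into an interior and an exterior, and each vertex of $C$, being cubic, has exactly one \emph{off-cycle} edge (the two cycle edges use up its other two incidences). First I would consider the cut separating the strict interior of $C$ from the rest of $G$: the interior is non-empty because $v$ lies inside, the edges crossing this cut are precisely the off-cycle edges of $C$-vertices directed inward to interior vertices, and $3$-edge-connectivity forces at least $3$ such edges; since a cubic vertex of $C$ carries at most one of them, at least $3$ distinct vertices of $C$ send their off-cycle edge into the interior. Next I would consider the cut separating $V(G(C))$ (the interior together with $C$) from the exterior: the exterior is non-empty because $C$ is internal (not the external cycle), the edges crossing this cut are precisely the off-cycle edges of $C$-vertices directed outward, and again $3$-edge-connectivity forces at least $3$ of them. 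The two sets of $C$-vertices so identified are disjoint, since a vertex's single off-cycle edge points either inward or outward, hence $|C|\ge 3+3=6$. (As a sanity check, in the cube graph the internal cycle enclosing a vertex has length $6$, matching the bound, while the length-$4$ squares either bound a face or are the external cycle.)

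The main obstacle I expect is not the counting but justifying that the interior/exterior of $C$ and the inward/outward classification of off-cycle edges are well defined, given that the ambient embedding $\phi'$ need not be planar. Here I would lean on the structure of $\phi'$ established earlier: its only crossing involves the edge $e^*$ incident to $v_1$, so away from $e^*$ the drawing is planar, the cycle $C$ is a genuine Jordan curve, and each off-cycle edge lies wholly inside or wholly outside $C$; the degenerate case in which $C$ interacts with $e^*$ would need a short separate check, but should not affect the cut counts. A secondary subtlety, worth stating explicitly, is that the reading of ``coeval with $C$'' must be the pairwise one (a radius-$4$ neighbourhood of $v$ in $\tau$), since requiring a single window to contain $v$ together with all of $C$ would force $|C|\le 4$ and contradict the claim.
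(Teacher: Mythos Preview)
Your proposal is correct and follows essentially the same approach as the paper. The paper's own proof is a two-line sketch---``Since $G$ is cubic triconnected and since $G(C)\neq C$, we have $|C|\ge 6$. Since every vertex is coeval with no more than $8$ vertices, we have $|C|\le 8$.''---and your two edge-cut arguments (interior vs.\ rest, $G(C)$ vs.\ exterior) are exactly the detailed justification underlying the first sentence, while your upper bound matches the second verbatim. Your reading of ``$v$ coeval with $C$'' as pairwise is the intended one (the paper's phrase ``coeval with no more than $8$ vertices'' confirms it), and your caution about the crossing in $\phi'$ is reasonable but ultimately unnecessary here: the cardinality bound is a structural fact that already holds in the underlying planar embedding of the triconnected cubic $G$, independently of the single rerouted edge $e^*$.
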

\begin{claimproof}
	Since $G$ is cubic triconnected and since $G(C)\not = C$, we have $|C| \ge 6$. Since every vertex is coeval with no more than 8 vertices, we have $|C| \le 8$.
\end{claimproof}

\begin{clm}
	\label{cl:critical_1containment}
	An internal critical cycle $C$ is critical for exactly one vertex.
\end{clm}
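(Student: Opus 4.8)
The plan is to argue by contradiction: assume the internal critical cycle $C$ is critical for two distinct vertices $v_p$ and $v_q$ with $p<q$, so that both lie strictly inside $C$ and both are coeval with every vertex of $C$. The backbone of the argument is a purely arithmetic count on the indices. Since $v_p$ is coeval with all of $C$, every index of a vertex of $C$ lies in $[p-4,p+4]$, and likewise in $[q-4,q+4]$; moreover $p$ and $q$ are themselves excluded, as $v_p,v_q\notin C$. Intersecting these ranges and using the lower bound $|C|\ge 6$ from \cref{cl:critical_cardinality}, a short case analysis on $d=q-p$ shows that the only surviving possibility is $d=1$ together with $|C|=6$, and that $C$ then uses exactly the index set $\{p-3,p-2,p-1,p+2,p+3,p+4\}$. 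The key consequence I would extract is \emph{saturation}: every one of the eight indices in $[p-3,p+4]$ is realized by $C\cup\{v_p,v_{p+1}\}$, so no further vertex can have an index in this window.

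Next I would exploit that $G$ is cubic and triconnected. Each vertex of $C$ carries exactly one edge off the cycle, directed either inside or outside $C$. Because $C$ is internal, the region outside $C$ contains vertices, and the region inside contains $v_p,v_{p+1}$; since $C$ separates the two regions, the cycle vertices whose off-cycle edge points outside (resp.\ inside) form the only interface to the outside (resp.\ inside) region. Triconnectivity then forbids an interface of size $\le 2$ on either side, forcing exactly three edges of $C$ to point inward and three outward.

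The heart of the argument, and the step I expect to be most delicate, is to show that this inward-edge budget of three is already exceeded. Let $I$ be the set of vertices strictly inside $C$. By saturation, every vertex of $I$ other than $v_p,v_{p+1}$ has index $\le p-4$ (put it in $L$) or $\ge p+5$ (put it in $R$). Coevality of adjacent vertices then confines the internal adjacencies sharply: inside $I$, the only vertex of $L$ that can be adjacent to $v_p$ is $v_{p-4}$, no vertex of $L$ is adjacent to $v_{p+1}$, a vertex of $R$ can only be adjacent to $v_{p+1}$, and $L$ and $R$ are mutually non-adjacent since their indices differ by at least nine. Suppose $L\neq\emptyset$. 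Triconnectivity applied to the block $L$, whose only neighbours outside $L$ are $v_p$ and the cycle vertices $v_{p-3},v_{p-2},v_{p-1}$ it can reach, forces $L$ to send at least two inward edges to $C$; at the same time $v_{p+1}$, having $v_p$ as its sole possible neighbour inside $I$, must send at least two of its three edges inward to $C$. As each vertex of $C$ absorbs at most one inward edge (cubicity), these are at least four distinct inward edges, contradicting the budget of three. The case $R\neq\emptyset$ is symmetric, and the remaining case $I=\{v_p,v_{p+1}\}$ is ruled out by a parity count (the identity $3|I|=2e_I+3$ relating degrees, internal edges $e_I$, and the three inward edges has no solution for $|I|=2$). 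Hence $C$ cannot be critical for two vertices.

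I anticipate the main obstacle to be making the triconnectivity/interface bookkeeping rigorous: one must argue cleanly that the inside and outside regions communicate only through the off-cycle edges of $C$, that a block with at most two neighbours outside itself yields a genuine $2$-separation, and that the inward edges counted for $L$ and for $v_{p+1}$ are genuinely distinct cycle vertices. The index saturation established in the first paragraph is exactly what makes all of this finite and forces the inward edges to pile up on too few cycle vertices.
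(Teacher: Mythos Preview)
Your arithmetic opening (forcing $q=p+1$, $|C|=6$, and the index set $\{p-3,p-2,p-1,p+2,p+3,p+4\}$ for $C$) and the $3$/$3$ split of the six off-cycle edges via triconnectivity are correct and parallel the paper. From there the paper is extremely terse: it splits on whether $G(C)\setminus C=\{v,v'\}$ (then only two cycle vertices look outward, giving a $2$-cut) or $|G(C)\setminus C|\ge 3$ (then it merely records that a third interior vertex cannot be coeval with both $v$ and $v'$ and declares a contradiction without spelling out why). Your inward-edge-budget argument is precisely what makes that second case rigorous, so your route is essentially the paper's but with the missing bookkeeping supplied.

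There is, however, one slip in your case analysis. In the case $L\neq\emptyset$ you assert that $v_{p+1}$'s sole possible neighbour inside $I$ is $v_p$; this fails if $v_{p+5}\in R$, a possibility you yourself flagged two sentences earlier. You must first dispose of the subcase where both $L$ and $R$ are non-empty: by the same neighbour count, $L$ forces at least two inward edges into $\{v_{p-3},v_{p-2},v_{p-1}\}$ and $R$ forces at least two into $\{v_{p+2},v_{p+3},v_{p+4}\}$, already four inward edges against a budget of three. Only after this does your $L\neq\emptyset$, $R=\emptyset$ argument go through as written. As a minor stylistic point, your parity treatment of $|I|=2$ is correct but redundant: once the $3/3$ split is established, $|I|=2$ yields $6-2e_I\in\{4,6\}$ inward edges and never $3$, so the contradiction is immediate without invoking parity.
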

\begin{claimproof}
	Suppose that there are two vertices $v$ and $v'$ for which $C$ is critical. Since every vertex is coeval with 8 vertices and since $v$ and $v'$ are distinct vertices, in this case it is not possible that $|C|>6$. If $|C|=6$, there are two possibilities: (1)~$v$ and $v'$ are the only two vertices in $G(C)\setminus C$. In this case, there are exactly two vertices $x$ and $y$ of $C$ not adjacent to vertices in $G(C)\setminus C$. Vertices $x,y$ are a separating pair. (2)~Otherwise, in $C$ there are exactly three vertices incident to vertices of $G(C)\setminus C$. Let $v''$ be another vertex in $G(C)\setminus C$. We have that $v''$ cannot be coeval with both $v$ and $v'$ (the only vertices coeval with both $v$ and $v'$ are the ones of $C$). In both cases we have a contradiction.
\end{claimproof}

\begin{clm}
	\label{cl:nopasticca}
	Let $C$ be an internal critical cycle for vertex $v$. Given any other critical cycle $C'$ for a vertex $v'$, we have $(G(C')\setminus C')\cap (G(C)\setminus C)=\emptyset$.
\end{clm}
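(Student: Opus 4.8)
The plan is to argue by contradiction. Assume some graph element $w$ lies strictly inside both cycles, i.e.\ $w\in(G(C)\setminus C)\cap(G(C')\setminus C')$; equivalently the two bounded regions $\mathrm{int}(C)$ and $\mathrm{int}(C')$ overlap. Throughout I would use two facts. First, since $v$ is coeval with every vertex of $C$, the set $C\cup\{v\}$ occupies a window of nine consecutive positions of $\tau$ centred at $v$, and symmetrically for $C'$ and $v'$; by \cref{cl:critical_cardinality} we have $6\le|C|,|C'|\le 8$, so in each such window at most two positions are free besides the one holding the critical vertex. Second, since the embedding is planar, no edge joins a vertex strictly inside $C$ to a vertex strictly outside $C$; hence every vertex of $\mathrm{int}(C)$ has all its neighbours in $\mathrm{int}(C)\cup V(C)$, and likewise for $C'$.

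The structural engine is that $G$ is cubic. At a vertex shared by $C$ and $C'$ each cycle would claim two of the three incident edges, so $C$ and $C'$ must share an edge there; consequently $C\cap C'$ is a disjoint union of paths, each with at least one edge, and there are no isolated shared vertices. I would then observe that if $C$ and $C'$ share exactly one path $Q$, then $Q$, $C\setminus Q$ and $C'\setminus Q$ are three internally disjoint paths between the two ends of $Q$, i.e.\ a theta graph; a routine inspection of the three faces of a theta graph shows that one of $\mathrm{int}(C),\mathrm{int}(C')$ then contains the other. Thus overlapping-but-not-nested forces $C$ and $C'$ to share at least two paths, and in particular at least four common vertices.

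This splits the analysis into two cases. In the crossing case the $\ge 4$ common vertices lie on both cycles, hence are coeval with both $v$ and $v'$, so the two windows overlap heavily; combining this with the ``at most two free positions'' bound and with triconnectivity (every separation of $G$ has order at least three, so the shared paths cannot be cut off) I would derive that $C\cup C'\cup\{v,v'\}$ needs more positions than its enclosing window offers, a contradiction. In the nested case, say $\mathrm{int}(C')\subsetneq\mathrm{int}(C)$, the whole of $C'\cup\{v'\}$ (at least seven vertices, pairwise distinct from $V(C)$) lies in $\mathrm{int}(C)$. When $v=v'$ this is immediate: both cycles are then coeval with $v$, hence vertex-disjoint subsets of a single nine-position window, and $|C|+|C'|\ge 12$ exceeds the eight available non-$v$ positions. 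For $v\ne v'$ I would push the same count through by using planarity (neighbours of interior vertices stay inside or on $C$) and the bandwidth bound $\omega-1=4$ to show that all of $C'$ must in fact be coeval with $v$, so that $C$ is critical for $v'$ as well, contradicting \cref{cl:critical_1containment}.

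I expect the nested case with $v\ne v'$ to be the main obstacle: the crude edge and vertex counts do not by themselves forbid a short cycle with an interior vertex sitting inside a larger one, so the argument must genuinely propagate the coevality window of $v$ inward through $\mathrm{int}(C)$ (via the bandwidth bound and the fact that the interior attaches to $C$ through few edges) until it captures all of $C'$. Verifying that this propagation forces $v'$ to be coeval with every vertex of $C$ — thereby enabling the reduction to \cref{cl:critical_1containment} — is the delicate step.
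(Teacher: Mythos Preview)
Your overall architecture (contradiction, crossing vs.\ nested) is reasonable, but there are two concrete problems.

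First, the subcase $v=v'$ should not be there at all, and your argument for it is wrong. In the paper, \cref{cl:nopasticca} is only meant (and only used) for $v\neq v'$: the very first line of its proof invokes \cref{cl:critical_1containment} to get ``$C$ is not critical for $v'$'', which is false when $v=v'$. Indeed, \texttt{Part~II} of the proof of \cref{le:nocriticalinternal} explicitly exhibits nested critical cycles $C^a\subset G(C^b)$ with $v^a=v^b$ and $6\le|C^a\cap C^b|\le 7$; so your assertion that in the nested case $C$ and $C'$ are ``vertex-disjoint subsets of a single nine-position window'' is simply false.

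Second --- and this is the substantive gap --- your counting strategy does not survive the configuration where $v\in C'$ (and possibly $v'\in C$). This is exactly the paper's Case~(ii), and it is where most of the work lies. In that situation one can have $|C|,|C'|\in\{7,8\}$ with a large overlap, so $C\cup C'\cup\{v,v'\}$ does \emph{not} overflow any single window and no pure cardinality argument yields a contradiction. The paper instead analyses the $\tau$-positions of the three ``outgoing'' vertices $x^a,x^b,x^c$ on the outer boundary of $C\cup C'$ and of their external neighbours $y^a,y^b,y^c$, and shows that some path in $G\setminus(G(C)\cup G(C'))$ from a $y$ with $\tau(y)<\tau(v)$ to a $y$ with $\tau(y)>\tau(v')$ must contain an edge whose endpoints are not coeval --- contradicting the standing assumption that every edge of $G$ is visible. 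Your proposal never leaves $C\cup C'$ and never touches the $\tau$-ordering or the visibility hypothesis, so it cannot close this case. Similarly, in the nested case with $v\neq v'$ your plan to ``propagate coevality inward'' and reduce to \cref{cl:critical_1containment} is not how the paper proceeds (and the implication you state, from ``all of $C'$ coeval with $v$'' to ``$C$ is critical for $v'$'', does not follow); the paper instead counts, on the outer cycle, three external attachments plus six internal attachment vertices (three disjoint paths to each of $v$ and $v'$), forcing $|C|\ge 9$ or $|C'|\ge 9$.
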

\begin{claimproof}
By \cref{cl:critical_1containment}, $C$ is not critical for $v'$ and $C'$ is not critical for $v$.
Suppose, by contradiction, that the statement does not hold.
We consider two different cases: (i)~$v'\not \in C$ and $v\not \in C'$; (ii)~$v'\in C$ or $v\in C'$.

\smallskip
\noindent \cal{Case (i).}  Suppose first that $C\not \in G(C')$ and $C' \not \in G(C)$; see, e.g., \cref{fi:cricial_disjoint-a}. In this case, since $G$ does not have vertices of degree 4, there are at least four vertices $x^a, x^b,x^c,x^d\in C \cap C'$ such that $x^i$ ($i\in \{a,b,c,d\}$) is incident to one edge in $C\cap C'$, one in $C'$ and not in $C$, and one in $C$ and not in $C'$. The external cycle of $C \cup C'$ has at least three vertices $y^a$, $y^b$, and $y^c$, adjacent to vertices outside it. Notice that, since $G$ is cubic, two vertices $x^i$ and $y^j$, with $i\in \{a,b,c,d\}$ and $j\in \{a,b,c\}$ cannot coincide. Also, in $C\cup C'$ there are three other vertices $z^a$, $z^b$, and $z^c$ connected to $v$ with three disjoint paths, due to the fact that $G$ is triconnected. Similarly, $C\cup C'$ contains three other vertices $z^d$, $z^e$, and $z^f$ connected to $v'$ with three disjoint paths.
Notice that it is possible to choose $z^i$ ($i\in \{a,\dots,f\}$) such that it does not coincide with any: $z^j$ ($j\neq i$ and $i\in \{a,\dots,f\}$); $x^k$ ($k\in \{a,\dots, d\}$); $y^g$ ($g \in \{a,b,c\}$). Hence, $|C\cap C'|\ge 4$ and $|C\cup C'|\ge 13$. It follows that either $|C|>8$ or $|C'|>8$, a contradiction. For example, in \cref{fi:cricial_disjoint-a} we have $|C'|= 9$.

Suppose $C'\in G(C)$; see, e.g., \cref{fi:cricial_disjoint-b}. Vertex $v$ is coeval with: Its three adjacent vertices $x^a,x^b,x^c$; at least three vertices  $y^a,y^b,y^c$ in $C$ adjacent to three vertices outside $G(C)$; at least three vertices $z^a,z^b,z^c$  in $C$ adjacent to three vertices in $G(C')$. Notice that if there are less than three vertices of $C$ connected to vertices of $G(C')$, then the graph is not triconnected. In the figure, we have  $z^a,z^b,z^c\in C\cap C'$. If, for example, $C$ and $C'$ are disjoint, we have that $z^a,z^b,z^c$ are in $C$ (and not in $C'$). We have that $v$ is coeval with $9$ vertices. A contradiction. 
The case $C\in G(C')$ is analogous, where the role of $v,C$ and $v',C'$ is inverted.



\begin{figure}[tb]
	\centering
	\subfigure[]{\label{fi:cricial_disjoint-a}}\includegraphics[width=0.4\columnwidth]{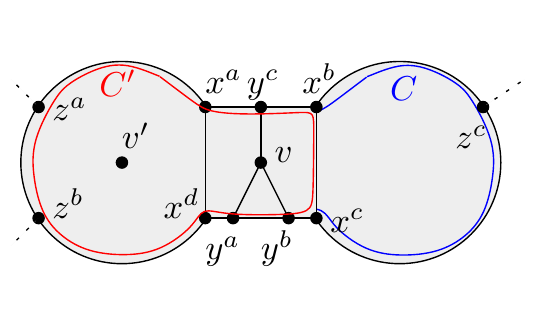}
	\hfil
	\subfigure[]{\label{fi:cricial_disjoint-b}}\includegraphics[width=0.4\columnwidth,page=3]{critical_disjoint.pdf}
	\caption{Illustration for \cref{cl:nopasticca}, Case (i).}
	\label{fi:cricial_disjoint}
\end{figure}

\smallskip
\noindent \cal{Case (ii).} Refer to \cref{fi:nopasticca-a}. Observe that, since $G$ is planar cubic triconnected, the external cycle of $C\cup C'$ has the following distinct vertices:
\begin{itemize}
	\item (at least) three vertices $x^a$, $x^b$, and $x^c$ adjacent to a vertex outside the cycle, $y^a$, $y^b$, and $y^c$, respectively.
	\item six other vertices $z^a, z^b, z^c$ and $z^d,z^e,z^f$, such that $z^a, z^b, z^c$ ($z^d,z^e,z^f$) are connected with three disjoint path to $v$ ($v'$), respectively. 
\end{itemize}

Notice that one between $x^a$, $x^b$, $x^c$ has to be contained in $C$ (resp. $C'$) and not in $C'$ (resp. $C$), otherwise two between $z^a,z^b,z^c$ (resp. $z^d,z^e,z^f$) is a separation pair.  We consider two subcases: (iia)~$v\in C'$ and $v'\in C$; (iib)~$v \in C'$ and $v'\not \in C'$ (the case  $v \not\in C'$ and $v' \in C'$ is analogous).

%

\begin{figure}[tb]
	\centering
	\subfigure[]{\label{fi:nopasticca-a}}\includegraphics[width=0.32\columnwidth]{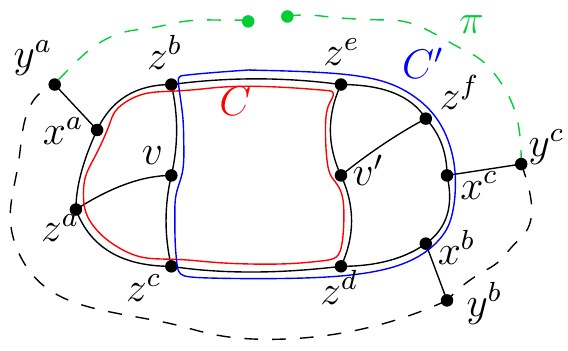}
	\hfil
	\subfigure[]{	\label{fi:nopasticca-b}}\includegraphics[width=0.32\columnwidth,page=9]{pasticca.pdf}
	\hfil
	\subfigure[]{	\label{fi:nopasticca-c}}\includegraphics[width=0.32\columnwidth,page=10]{pasticca.pdf}
	\caption{Illustration for \cref{cl:nopasticca}, Case (ii).}
	\label{fi:nopasticca}
\end{figure}

\medskip
\noindent
\emph{Case (ii.a):}
Suppose $v\in C'$ and $v'\in C$. Refer to \cref{fi:nopasticca-a}. 
In this case at least two between $z^a, z^b, z^c$, say $z^b$ and $z^c$, and two between $z^d,z^e,z^f$, say $z^d$ and $z^e$, are in $C\cap C'$. Cycle $C$ (resp. $C'$) contains $z^a$, $z^b$, and $z^c$ ($z^d$, $z^e$, and $z^f$).  Hence, either $|C'|=|C|=8$, or $|C'|=8$ and $|C|=7$, or $|C'|=7$ and $|C|=8$. We assume $|C'|=8$ and $|C|=7$, the other two cases can be solved similarly. Since $|C|=7$ and $|C'|=8$: $z^a\in C$ and $z^a\not \in C'$; $z^f\in C'$ and $z^f\not \in C$; $x^a\in C$ and $x^a\not \in C'$; $x^c,x^b\in C'$ and $x^c,x^b\not \in C$. We show that, in this case, $G$ contains an edge incident to two vertices that are not coeval. This would be a contradiction. Let $\tau(u)=i$ for any vertex $u=v_i$ of $G$. Suppose, w.l.o.g., $\tau(v)<\tau(v')$.


We show that $\tau(y^a)<\tau(v)$. We have that $C'$ contains all the vertices coeval with $v'$.
Since $\tau(v)<\tau(v')$ and 
$v$ and $v'$ are coeval ($v\in C'$ and $v'\in C$), for any $u$ such that $\tau(u)>\tau(v)$ and $u$ is coeval with $v$, $u$ and $v'$ are coeval. Since $x^a$ is coeval with $v$ and not coeval with $v'$, we have $\tau(x^a)<\tau(v)$. If $\tau(y^a)>\tau(v)$, since $x^a$ and $v$ are coeval we have $y^a$ and $v$ coeval. In this case, $y^a$ would be coeval with $v'$, but this is not possible since all the vertices coeval with $v'$ are in $C'$. Hence,  $\tau(y^a)<\tau(v)$. 

We now show that $\tau(y^c)>\tau(v')$. Notice that, similarly to the previous case, since $\tau(v)<\tau(v')$ and since $v$ and $v'$ are coeval, for any $u$ such that $\tau(u)<\tau(v')$ and $u$ is coeval with $v'$, $u$ and $v$ are coeval. Let $w$ be the vertex coeval with $v$ and not in $C$ (recall that we assumed $|C|=7)$. We have that $w$ can be contained in at most one of the two paths connecting $y^a$ to $x^b$ and $x^c$ that are outside $C\cup C'$ and containing $y^b$ and $y^c$, respectively. Assume, w.l.o.g., that such path $\pi$ connecting $y^a$ to $x^c$ does not contain $w$. Hence, $w\not = x^c$ and, consequently, $x^c$ and $v$ are not coeval. It follows that $\tau(x^c)> \tau(v')$. If $\tau(y^c)<\tau(v')$, since $x^c$ and $v'$ are coeval, we have $y^c$ and $v'$ are coeval. In this case, $y^c$ would be coeval with $v$, but this is not possible since this implies $y^c=w$ but that contradicts the fact that $w\not\in \pi$. Hence, $\tau(y^c)>\tau(v')$.

We have $\tau(v)<\tau(v')$, $\tau(y^a)< \tau(v)$, $\tau(y^c)> \tau(v')$. Any vertex $u$ coeval with $v$ such that $\tau(u)>\tau(v)$ is in $G(C) \cup G(C')$ or it is $w$. Recall that $\pi$ connects $y^a$ to $y^c$ and does not contain $w$. Hence, $\pi$ contains an edge incident to two vertices that are not coeval. A contradiction.

\medskip
\noindent
\emph{Case (ii.b):} Suppose $v \in C'$ and $v'\not \in C'$. There are two cases: (1)~$G(C')\subset G(C)$; (2)~$G(C')\not \subset G(C)$, $G(C)\not \subset G(C')$, and $G(C')\cap G(C)\not = \emptyset$. 

(1) Refer to \cref{fi:nopasticca-b}. In this case we can choose $z^e$ and $z^d$ such that $C$ contains: $z^e$ or $z^d$, if $C'$ contains two between $x^c$ and $x^b$; both of them, as in the figure. Since $G(C')\subset G(C)$, $C$ contains  $x^b$, $x^c$, $y^b$, and $y^c$. Hence $|C|\ge 9$ and by \cref{cl:critical_cardinality} $C$ is not critical.

(2) Refer to \cref{fi:nopasticca-c}. There is a path (in the figure, an edge) such that $v$ and $v'$ are not in the same face. In this case the value of $|C'|$ increases by two with respect to its value in Case~(ii.a) and $|C'|\ge 9$ ($|C'|= 9$ if one between $x^b$ and $x^c$ is in $C$ and not in $C$, $|C'|= 10$ otherwise). Hence $|C'|\ge 9$ and by \cref{cl:critical_cardinality} $C'$ is not critical.

In both Cases~(i) and~(ii), we have a contradiction.
\end{claimproof}

Finally, we have all the ingredients to prove \cref{le:nocriticalinternal}.
\begin{lemma}
	\label{le:nocriticalinternal}
	Graph $G$ always admits a good embedding.
\end{lemma}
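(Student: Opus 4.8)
The plan is to build a good embedding by starting from the embedding $\phi'$ produced by \cref{clm:cubic_external_face} and then repairing it one critical cycle at a time. In $\phi'$ the external cycle is already non-critical and every pair of crossing edges is non-coeval, so the only remaining obstructions to goodness are the \emph{internal} critical cycles. Here I would invoke the three structural claims already proved: by \cref{cl:critical_cardinality} every internal critical cycle $C$ satisfies $6 \le |C| \le 8$, by \cref{cl:critical_1containment} it is critical for a single vertex $v$, and by \cref{cl:nopasticca} the open interiors $G(C)\setminus C$ of distinct internal critical cycles are pairwise disjoint. In particular no internal critical cycle lies inside another, so the critical cycles form a family of side-by-side ``bubbles'' with disjoint interiors, each trapping its own unique vertex, and I can treat them independently.

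The core step is the repair of a single internal critical cycle $C$, critical for $v=v_i$. The idea is to reroute suitable edges incident to $v$ so that $v$ becomes incident to the region outside $C$, turning $C$ into a non-critical cycle, while ensuring that each crossing thereby created is between two \emph{non-coeval} edges. Here the cardinality bound is exactly what I would exploit: since $v$ is coeval with every vertex of $C$, all vertices of $C$ have $\tau$-index in $\{i-4,\dots,i+4\}$, and because $|C|\ge 6$ occupies at least six of these eight slots, $C$ must contain a vertex whose index is extreme with respect to $i$ (at distance $4$ on at least one side). Routing the escaping edge across an edge of $C$ incident to such an extreme vertex makes the four endpoints of the two crossing edges span more than a window of size $5$, which is precisely the condition guaranteeing the crossing is non-coeval. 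Thus \cref{cl:critical_cardinality} is what makes a legal crossing edge available.

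Because the interiors $G(C)\setminus C$ are pairwise disjoint by \cref{cl:nopasticca}, performing this local repair for one cycle cannot push a vertex into the interior of another critical cycle, so the repairs for all internal critical cycles can be carried out independently. It then remains to verify two things: that no repair creates a fresh critical cycle, and that no two coeval edges are forced to cross. The second holds by the non-coeval choice above together with the fact that $\phi'$ started with only non-coeval crossings. For the first, each repair adds crossings only at the boundary of its own (disjoint) bubble and frees the unique trapped vertex, so any newly formed cycle passes through a crossing dummy and cannot separate a vertex from its coeval mates in the way \cref{cl:critical_cardinality,cl:critical_1containment} require of a critical cycle. The resulting embedding then has no critical cycle and no coeval crossing, i.e.\ it is a good embedding, and \cref{le:sufficient_cond} (via the outerplanarity argument of \cref{th:outerplanar-minimal}) completes the proof of realizability.

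I expect the main obstacle to be this per-cycle repair: simultaneously ensuring that every introduced crossing is non-coeval, that the trapped vertex is genuinely freed, and that no new critical cycle is spawned. The disjointness of \cref{cl:nopasticca} decouples the repairs and the size bound of \cref{cl:critical_cardinality} supplies the extreme-index vertex needed for a legal crossing, but pinning down the exact rerouting is delicate when $v$ is not adjacent to two consecutive vertices of $C$, so that more than one incident edge must be redirected around the outside of $C$ and each detour must again be checked to cross only non-coeval edges. This case analysis, guided by the figures, is the technical heart of the argument.
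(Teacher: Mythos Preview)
Your proposal contains a genuine gap in the use of \cref{cl:nopasticca}. That claim (as its proof makes explicit) compares an internal critical cycle $C$ for $v$ with a critical cycle $C'$ for a \emph{different} vertex $v'$; when $v'=v$ the conclusion $(G(C')\setminus C')\cap(G(C)\setminus C)=\emptyset$ fails trivially, since $v$ lies in both interiors. Hence your assertion that ``no internal critical cycle lies inside another'' is false: two critical cycles for the \emph{same} trapped vertex can be nested, with the inner one consisting of vertices of the outer one plus chords. The paper treats exactly this situation in a separate part of the proof, and without it your ``independent bubbles'' picture, and therefore the independence of the repairs, does not hold.

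The rerouting mechanism you sketch is also different from the paper's and, as stated, does not clearly work. You propose to reroute edges \emph{incident to the trapped vertex} $v$ across an edge of $C$ near an extreme-index vertex. The paper instead selects a vertex $v_i$ \emph{on} $C$ that is not coeval with several other vertices of $C$, observes that the subgraph of $G(C)$ restricted to vertices coeval with $v_i$ has a face $f$ touching all of $C$, and reroutes a single edge $e\subset C$ incident to $v_i$ through $f$ so that $v$ switches sides of $e$; this guarantees that every edge $e$ now crosses is non-coeval with $e$. Your non-coevality argument (``the four endpoints span more than a window'') is not sufficient: all four endpoints you name lie in $\{v_{i-4},\dots,v_{i+4}\}$, and whether two such edges share a $G_m$ depends on which endpoints are large and which are small, not merely on the presence of an index at distance~$4$. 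Finally, the paper needs a dedicated argument (its Part~III) to show that the edges moved for different critical cycles, and the edge $e^*$ from \cref{clm:cubic_external_face}, do not cross one another; your sketch does not address this interaction.
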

\begin{claimproof}
%
%
	Let $\phi$ be any embedding of $G$ with the properties described by \cref{clm:cubic_external_face}. We have that the external cycle of $G$ is not a critical cycle. In order to prove the claim, we consider all the critical cycles of $G$ and we modify $\phi$ in order to make them not critical, one by one. In \texttt{Part~I} of the proof we consider critical cycles that do not contain other critical cycles. In \texttt{Part~II}, we will show how to handle the rest of the cycles. During the first two parts of the proof we change $\phi$ with respect to some edges. In \texttt{Part~III} we show that these edges do not cross in the final embedding.
	
	\smallskip
	\texttt{Part I:}  Let $C$ be a critical cycle not containing other critical cycles. Let $v_k$ be the vertex of for which $C$ is critical. By \cref{cl:critical_1containment}, this vertex is unique.  Since any vertex $v_i$ such that $i\le 3$ or $i\ge n-2$ is coeval with less than $6$ vertices, by \cref{cl:critical_cardinality} we can assume $3<k< n-2$. 
	
	Suppose $|C|=6$. See \cref{fi:remove_critical_I-a}. Cycle $C$ always contains a vertex $v_i$ such that at least two vertices in $C$ are not coeval with $v_i$. In particular, if $C$ contains a vertex $v_j$ such that $j\in \{k-4,k+4\}$, $i=j$. Otherwise, $C=\{v_{k-3},v_{k-2},v_{k-1},$ $v_{k+1},v_{k+2},v_{k+3}\}$ and $i=k-3$ (or $i=k+3$). 	
	
	Consider the 6 faces adjacent to $C$, three internal and three external. 
	Vertex $v_i$ is not coeval with some edge incident to 4 of the 6 vertices of $C$, hence, the subgraph of $G(C)$ consisting only of vertices coeval with $v_i$ consists of two faces and at least one of them, say $f$, is incident to all the vertices of $C$. See \cref{fi:remove_critical_I-b} Let $e=(v_i,v_{i'})$ be an edge of $C$ and assume that, going from $v_i$ to $v_{i'}$,  $v$ is on the right (left) of $e$. We can change the embedding around the vertices of $e$  so that $e$ goes through $f$ such that $v$ is on the left (right) of $e$. Cycle $C$ is not critical for~$v$ anymore and we did no introduce crossings between two coeval edges. See \cref{fi:remove_critical_I-c}.

\begin{figure}[htb]
	\centering
	\subfigure[]{\label{fi:remove_critical_I-a}}\includegraphics[width=0.28\columnwidth,page=2]{pasticca.pdf}
	\hfil
	\subfigure[]{\label{fi:remove_critical_I-b}}\includegraphics[width=0.28\columnwidth,page=3]{pasticca.pdf}
	\hfil
	\subfigure[]{\label{fi:remove_critical_I-c}}\includegraphics[width=0.28\columnwidth,page=4]{pasticca.pdf}
	\caption{Illustration for \cref{cl:nopasticca}, \texttt{Part~I}.}
	\label{fi:remove_critical_I}
\end{figure}
	
	Suppose $|C|=7$ or $|C|=8$. In this case there is always a vertex $v_i$ that has 3 and 4 vertices of $C$ that are not coeval with $v$. In particular, we can always choose $v_i$ such that either $i=k-4$ or $i=k+4$. Hence, we always have that the remaining vertices coeval with $v_i$ are 4, as in the previous case, and we can prove these two new cases with the same considerations.
	
	\smallskip
	\texttt{Part II:} Let $C^a$ and $C^b$ be two cycles such that $C^a\in G(C^b)$ critical for $v^a$ and $v^b$, respectively. By \cref{cl:nopasticca} and \cref{cl:critical_1containment}, $C^a$ and $C^b$ are critical for a same vertex $v=v^a=v^b$. Hence, by \cref{cl:critical_cardinality} and since $v$ is coeval with 8 vertices, $6\le C^a\cap C^b\le 7$. 
	
	Suppose, w.l.o.g., that $C^b\not \subset C^a$. We first show that $C^a\subset C^b$. Suppose, by contradiction, $C^a\not \subset C^b$. Since the graph is triconnected, in this case there is an edge connecting the vertices of $C^a$ not contained in $C^b$ to the vertices of $C^b$ not in $C^a$, otherwise there are two vertices $w$ and $w'$ that are a separation pair. See \cref{fi:remove_critical_II-a}. Refer now to \cref{fi:remove_critical_II-b}. Graph $C^a\cup C^b$ contains: Three vertices $x^a$, $x^b$, $x^v$ connected with a path to $v$; three vertices $y^a$, $y^b$, $y^c$ connected to the graph outside $G(C^a\cup C^b)$; (at least) four vertices $z^a$, $z^b$, $z^c$, $z^d$ incident to the three edges connecting $C^a$ to $C^b$, that have to be different from the previous vertices. Hence, $|C^a\cup C^b|\ge 10$ and $v$ is coeval with $10$ vertices. 
	
	We have $C^a\subset C^b$ and, consequently, $C^a$ is composed by vertices of $C^b$ and cords connecting two edges of $C^b$. See \cref{fi:remove_critical_II-c}. We can apply the approach of \texttt{Part~I}, with the difference that if $v_i$ is adjacent to one of these cords, we change the embedding with respect to the edge $e$ incident to $v_i$ in $C^a$ that is not the cord. In \cref{fi:remove_critical_II-c}, the edges in common between $C^a$ and $C^b$ are dashed. 
\begin{figure}[htb]
	\centering
	\subfigure[]{\label{fi:remove_critical_II-a}}\includegraphics[width=0.28\columnwidth,page=3]{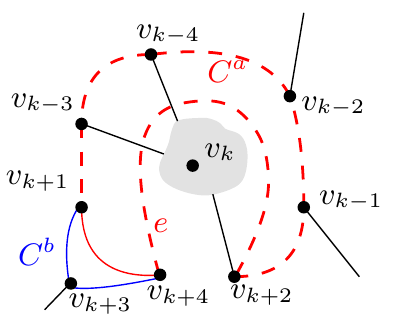}
	\hfil
	\subfigure[]{\label{fi:remove_critical_II-b}}\includegraphics[width=0.28\columnwidth,page=2]{critical_innested.pdf}
	\hfil
	\subfigure[]{\label{fi:remove_critical_II-c}}\includegraphics[width=0.28\columnwidth,page=1]{critical_innested.pdf}
	\caption{Illustration for \cref{cl:nopasticca}, \texttt{Part~II}.}
	\label{fi:remove_critical_II}
\end{figure}

	\smallskip
	\texttt{Part III:} Notice that in \texttt{Parts I} and \texttt{II} we changed the embedding with respect to exactly one edge $e$ (or simply, we ``moved one edge $e$'') for each critical cycle $C$. Also, after this operation, $e$ crosses only edges in $G(C)$. Hence,  By \cref{cl:nopasticca}, no two edges that we moved in \texttt{Parts I} and \texttt{II} cross. We have that, for any edge $e$ that we moved, $e$
	is never incident to $v_k$ such that $k\le 5$, since there exist no critical cycle of $G$ such that the vertex $v_i$ with maximum $i$ in $C$ is $k$.  Hence, by \cref{clm:cubic_external_face}, no edge moved in \texttt{Parts I} and \texttt{II} crosses the edge moved while processing the external face. 
	\medskip
	   
	In \texttt{Parts I} and \texttt{II} we processed one by one each critical internal cycle in order to make it not critical, without creating crossings between coeval edges. We did it starting from an embedding where the external face was not critical. Hence, the obtained embedding, is a good embedding.
\end{claimproof}

The theorem holds by \cref{le:sufficient_cond,le:nocriticalinternal}. Concerning the computational time, choosing an embedding where the external face is not critical can be done in $O(1)$ time as described in the proof of \cref{clm:cubic_external_face}. Computing all the critical cycles can be done in $O(n)$ time. Fixing one by one each one of them, as described in the proof of \cref{le:nocriticalinternal}, takes linear time. This is sufficient by \cref{le:sufficient_cond}.
\end{appendixproof}



\section{Lower Bounds for Non-minimal Graph Stories}\label{se:lower-bounds}

The next lemma can be used to prove lower bounds on the number of extra points required for the realizability of certain graph stories.


\begin{lemma2rep}\label{le:three-cycles}
Let ${\cal S}=(G,\omega,k,\tau)$ be a realizable graph story.
Suppose that: (i) $G$ contains vertex-disjoint cycles $C_1, \dots, C_h$ such that 
$C_{j-2}, C_{j-1}, C_j \in G_{i_j}$, with $j=3, \dots, h$, $i_{j-1} < i_{j}$ and $i_{j}-i_{j-1}<\omega$; (ii)
in all planar embeddings of $G_{i_j}$, $C_{j-1}$ separates $C_{j-2}$ from $C_j$.
We have that $\sigma=\omega+k \in \Omega(h)$.
\end{lemma2rep}
\begin{proofsketch}
Let ${\cal R} = \langle \Gamma_1, \dots, \Gamma_n \rangle$ be a realization of $\cal S$ and let $\sigma_i$ be the total number of points used by $\langle \Gamma_1, \dots, \Gamma_i \rangle$.
Without loss of generality, assume that in all planar embeddings of $G_{i_j}$, cycle $C_j$ is outside $C_{j-1}$, which is outside $C_{j-2}$. Also, observe that a cycle has at least $3$ vertices.
We prove, by induction on $j$, that the points used by $\cal R$ for the vertices in $\{C_1, \dots, C_{h-1}\}$ lie in the plane region delimited by~$C_h$ and that $\sigma_h \geq 9 + 3(h-1)$. See the appendix~for~details.\end{proofsketch}
\begin{appendixproof}
Let ${\cal R} = \langle \Gamma_1, \dots, \Gamma_n \rangle$ be a realization of $\cal S$ and let $\sigma_i$ be the total number of points used by $\cal R$ to draw $\langle \Gamma_1, \dots, \Gamma_i \rangle$.
Without loss of generality, assume that in all planar embeddings of $G_{i_j}$, cycle $C_j$ is outside $C_{j-1}$, which is outside $C_{j-2}$. Also, observe that a cycle has at least $3$ vertices.

The proof is by induction on $j$, showing that the points used by $\cal R$ to represent vertices in $\{C_1, \dots, C_{h-1}\}$ are contained in the region of the plane delimited~by~$C_h$ and that $\sigma_h \geq 9 + 3(h-1)$.
As a base case, consider any planar drawing $\Gamma_{i_3}$ of graph $G_{i_3}$. It contains vertex-disjoint cycles $C_1$, $C_2$, and $C_3$. 
Hence, by hypothesis the region of the plane delimited by $C_3$ contains the points used to represent $C_1$ and $C_2$.
Also, $\cal R$ uses at least $\sigma_{i_3} \geq 9$ points for the drawings of $\langle \Gamma_1, \dots, \Gamma_{i_3} \rangle$.

As for the inductive case, consider graph $G_{i_j}$ and any of its drawings $\Gamma_{i_j}$. By the inductive hypothesis, the points used by $\cal R$ to represent vertices in $\{C_1, \dots, C_{{i_{j-1}}-1}\}$ are contained in the region of the plane delimited by $C_{i_{j-1}}$.
Also, the realization $\cal R$ uses $\sigma_{i_{j-1}} \geq 9 + 3(({j-1})-1)$ points for drawings $\langle \Gamma_1, \dots, \Gamma_{i_{j-1}} \rangle$.  
Drawing $\Gamma_{i_j}$ contains cycles $C_{i_{j-2}}$, $C_{i_{j-1}}$, and $C_{i_j}$. By hypothesis, we have that cycle $C_{i_j}$ is outside $C_{i_{j-1}}$, that is outside $C_{i_{j-2}}$ in $\Gamma_{i_j}$. There exist at least three vertices that belong to $C_{i_j}$ and are outside $C_{i_{j-1}}$. Therefore, the points used by~$\cal R$ to represent vertices in $\{C_1, \dots, C_{i_j}\}$ are contained in the region of the plane delimited by $C_{i_j}$ and that $\sigma_{i_j} \geq \sigma_{i_{j-1}} + 3 \geq  9 + 3(j-1)$.
\end{appendixproof}

%

As an example, we exploit \cref{le:three-cycles} to prove the following theorem, which generalizes~\cite[Theorem 1]{bddfp-gsisa-jgaa20} and whose proof can be found in the appendix.

Let $n \equiv 0 \mod 3$. An \emph{$n$-vertex nested triangles~graph} $G$ contains the vertices and edges of the 3-cycle $C_i = (v_{i-2},v_{i-1},v_i)$, for $i = 3,6,\dots,n$, plus the edges $(v_i,v_{i+3})$, for $i = 1,2,\dots,n-3$. For $n \geq 6$, $G$ is triconnected, thus it has a unique planar embedding (up to the choice of the external face) \cite{Whitney33}.

\begin{theorem2rep}\label{th:nested-triangulation}
Let ${\cal S} =(G,9,k,\tau)$ be a realizable graph story such that $G$ is a $3h$-vertex nested triangles graph, where $\tau$ is given by the indices of the vertices of~$G$. Any realization of $\cal S$ has $k \in \Omega(n)$, where $n=3h$ is the number of vertices~of~$G$.
\end{theorem2rep}
\begin{appendixproof}
Consider the vertex-disjoint cycles $C_1 = (v_1, v_2, v_3)$, $C_2 = (v_4, v_5, v_6)$, \dots, $C_i = (v_{3i-2}, v_{3i-1}, v_{3i})$, \dots, $C_h = (v_{3h-2}, v_{3h-1}, v_{3h})$. 
We have that $G_9$ contains cycles $C_1, C_2$, and $C_3$. Also, in any planar embedding of $G_9$ we have that $C_2$ separates $C_1$ from $C_3$. More generally, for $j= 3, 4, \dots, h$ graph $G_{3j}$ contains cycles $C_{j-2}$, $C_{j-1}$, and $C_{j}$, and in any planar embedding of $G_{3j}$ we have that $C_{j-1}$ separates $C_{j-2}$ from $C_j$.
By \cref{le:three-cycles}, we have that $\omega+k \in \Omega(h) \in \Omega(n)$. Since $\omega$ is a constant, we have that $k \in \Omega(n)$.
\end{appendixproof}

While \cref{le:three-cycles} exploits the uniqueness of the embedding of $G$, the next result provides lower bounds also for graphs that have several~planar~embeddings.

\begin{figure}[tb]
    \centering
    \includegraphics[page=1, width=0.3\columnwidth]{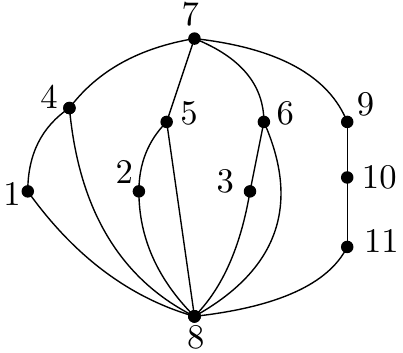}
    \caption{Illustration for \cref{th:sp-non-minimal}. Case $\omega=8$.}
    \label{fi:non-minimal-8-G}
\end{figure}

\begin{theorem2rep}\label{th:sp-non-minimal}
For any $\omega \geq 8$, there exists a graph story ${\cal S}=(G,\omega, k,\tau)$ such that $G$ is a series-parallel graph and ${\cal S}$ is not realizable for $k < \lfloor\frac{\omega}{2}\rfloor-3$. 
\end{theorem2rep}
\begin{proofsketch}
We prove here the statement for $\omega = 8$, and we show in the appendix how to extend the result to any $\omega > 8$. Consider the instance ${\cal S}=(G,8,0,\tau)$ in \cref{fi:non-minimal-8-G}, where the vertices are labeled with their subscript in the order $\tau = \langle v_1, v_2, \dots, v_{11} \rangle$. Graph~$G$ is a parallel composition of four components, three of which are a series of an edge and a triangle, and the other one is a path of length four.
\begin{figure}[tb]
    \centering
     \subfigure[$G_8$ for $k=0$]{\label{fi:non-minimal-story-8-failure-G8}\includegraphics[page=2,width=0.24\columnwidth]{non-minimal2.pdf}}
     \hfill
     \subfigure[$G_{11}$ for $k=0$ (failure)]{\label{fi:non-minimal-story-8-failure-G11}\includegraphics[page=3,width=0.24\columnwidth]{non-minimal2.pdf}}
     \hfill
     \subfigure[$G_8$ for $k=1$]{\label{fi:non-minimal-story-8-solution-G8}\includegraphics[page=4,width=0.24\columnwidth]{non-minimal2.pdf}}
     \hfill
     \subfigure[$G_{11}$ for $k=1$]{\label{fi:non-minimal-story-8-solution-G11}\includegraphics[page=5,width=0.24\columnwidth]{non-minimal2.pdf}}
    \caption{Illustration for \cref{th:sp-non-minimal}. Case $\omega=8$. Drawings of $G_8$ and $G_{11}$.}
    \label{fi:non-minimal-8}
\end{figure}

Observe that, in any planar embedding of $G_\omega=G_8$ at most two among $v_1, v_2$, and $v_3$ can be incident to the same face (see \cref{fi:non-minimal-story-8-failure-G8}).
Graph $G_{11}$ contains the paths $(v_7, v_4, v_8), (v_7, v_5, v_8), (v_7, v_6, v_8)$, and $(v_7, v_9, v_{10}, v_{11}, v_8)$. Since $v_9, v_{10}$, and $v_{11}$ are mapped to the points where $v_1, v_{2}$ and $v_3$ are mapped, respectively, it is not possible to obtain a planar embedding of $G_{11}$
(see \cref{fi:non-minimal-story-8-failure-G11}). 
Thus, $\cal S$ does not admit a realization. 

To prove that ${\cal S}=(G,8, k,\tau)$ is realizable for $k \geq \lfloor\frac{\omega}{2}\rfloor-3 = 1$, suppose~that~$v_1$ and $v_2$ are drawn on the same face $f$ and there is an extra point $p$ inside~$f$. In~this case $\cal S$ is realizable, and $G_8$ and $G_{11}$ are drawn as in~\cref{fi:non-minimal-story-8-solution-G8,fi:non-minimal-story-8-solution-G11}.
\end{proofsketch}
\begin{appendixproof}
We first prove the statement for $\omega = 8$, and then we extend the result to any $\omega > 8$. Consider the instance ${\cal S}=(G,8,0,\tau)$ in \cref{fi:non-minimal-8-G}, where the vertices are labeled with their subscript in the order $\tau = \langle v_1, v_2, \dots, v_{11} \rangle$. Graph~$G$ is a parallel composition of four components, three of which are a series of an edge and a triangle, which we call \emph{flags}, and the other one is a path of length four.
Observe that, in any planar embedding of $G_\omega=G_8$ at most two among $v_1, v_2$, and $v_3$ can be incident to the same face (see, e.g., \cref{fi:non-minimal-story-8-failure-G8}).
Graph $G_{11}$ contains the paths $(v_7, v_4, v_8), (v_7, v_5, v_8), (v_7, v_6, v_8)$, and $(v_7, v_9, v_{10}, v_{11}, v_8)$. Since $v_9, v_{10}$, and $v_{11}$ are mapped to the points where $v_1, v_{2}$ and $v_3$ are mapped, respectively, it is not possible to obtain a planar embedding of $G_{11}$
(see, e.g., \cref{fi:non-minimal-story-8-failure-G11}). 
Thus, $\cal S$ does not admit a realization. 

To prove that ${\cal S}=(G,8, k,\tau)$ is realizable for $k \geq \lfloor\frac{\omega}{2}\rfloor-3 = 1$, suppose~that $v_1$ and $v_2$ are drawn on the same face $f$ and there is an extra point $p$ inside~$f$. In this case $\cal S$ is realizable, and $G_8$ and $G_{11}$ are drawn as in~\cref{fi:non-minimal-story-8-solution-G8,fi:non-minimal-story-8-solution-G11}.

\begin{figure}[htb]
    \centering
    \includegraphics[page=6, width=0.5\columnwidth]{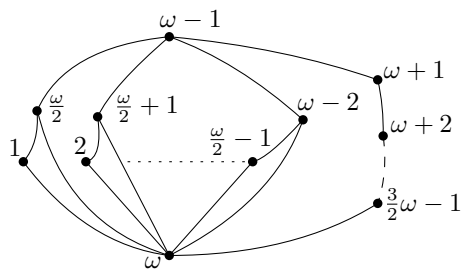}
    \caption{Illustration for \cref{th:sp-non-minimal}. Case $\omega>8$ even.}
    \label{fi:sp-non-minimal-even-G}
\end{figure}

Consider now the case in which $\omega > 8$ and even. Graph $G$ is similar to the one described in the previous case, but it has $\frac{\omega}{2}-1$ parallel flags and a path of length~$\frac{\omega}{2}$, as shown in \cref{fi:sp-non-minimal-even-G}.
In any planar embedding of $G_\omega$, at most two vertices of the sequence $v_1,\dots, v_{\frac{\omega}{2}-1}$ can be inside the same cycle $v_\omega, v_i, v_{i+1}, v_{\omega-1}$ ($i=\frac{\omega}{2}, \dots, \omega-2$).
If $k = \frac{\omega}{2}-4$, in $G_\omega$ the number of free points (i.e., points to which no vertex is mapped) is equal to the number of flag components minus $3$. Let $f$ be a face shared by two vertices of the sequence $v_1,\dots, v_{\frac{\omega}{2}-1}$, say $v_1$ and $v_2$, and assume that all the $k$ free points are inside $f$ (see, e.g., \cref{fi:non-minimal-even-failure-omega}). For $G_{\omega+1}, \dots, G_{\frac{3}{2}\omega-2}$ all the vertices of the path can be drawn using the $k$ extra points, as they are incident to the same face. Note that $G_{\omega+1}$ does not contain vertex $v_1$ and $G_{\omega+2}$ does not contain vertex $v_2$, thus face $f$ has been merged with the faces of the flags comprehending $v_1$ and $v_2$, and this new face $f'$ contains the two points to which $v_1$, $v_2$ were mapped. In $G_{\frac{3}{2}\omega-1}$, vertex $v_{\frac{3}{2}\omega-1}$ should be drawn in $f'$ so to maintain planarity, but all the points in $f'$ have been used to draw the $\frac{\omega}{2}-2$ vertices of the path and the $k$ free points are inside other faces (at most two are in the same face). Thus $(G,\omega, k,\tau)$ is not realizable for $k  = \frac{\omega}{2}-4 < \lfloor\frac{\omega}{2}\rfloor-3$ (see, e.g., \cref{fi:non-minimal-even-failure-omega+i}). Note that with~one more point on $f'$, i.e., $k  = \frac{\omega}{2}-3 = \lfloor\frac{\omega}{2}\rfloor-3$, $\cal S$ is realizable; see \cref{fi:non-minimal-even-success-omega,fi:non-minimal-even-success-omega+i}.
\begin{figure}[htb]
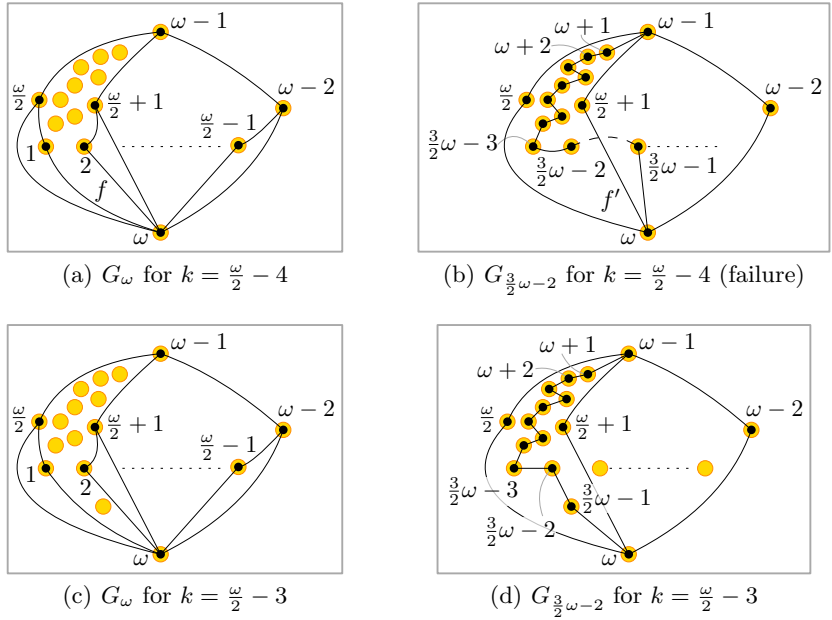

    \centering
     \subfigure[$G_\omega$ for $k=\frac{\omega}{2}-4$]{\label{fi:non-minimal-even-failure-omega}\includegraphics[page=7, width=0.45\columnwidth]{non-minimal2.pdf}}
     \hfil
     \subfigure[$G_{\frac{3}{2}\omega-2}$ for $k=\frac{\omega}{2}-4$ (failure)]{\label{fi:non-minimal-even-failure-omega+i}\includegraphics[page=8, width=0.45\columnwidth]{non-minimal2.pdf}}
     \\
     \subfigure[$G_\omega$ for $k=\frac{\omega}{2}-3$]{\label{fi:non-minimal-even-success-omega}\includegraphics[page=9, width=0.45\columnwidth]{non-minimal2.pdf}}
     \hfil
     \subfigure[$G_{\frac{3}{2}\omega-2}$ for $k=\frac{\omega}{2}-3$]{\label{fi:non-minimal-even-success-omega+i}\includegraphics[page=10, width=0.45\columnwidth]{non-minimal2.pdf}}
    \caption{Illustration for \cref{th:sp-non-minimal}. Case $\omega>8$ even. Drawings of $G_\omega$ and~$G_{\frac{3}{2}\omega-2}$.}
    \label{fi:non-minimal-even}
\end{figure}

Finally, consider the case in which $\omega > 8$ and odd. Again, graph $G$ is similar to the one of previous case but there are $\frac{\omega+1}{2}-1$ parallel flags, two of which are on the same parallel component, thus creating a \emph{double flag}, and the path has length $\frac{\omega+1}{2}$, as shown in \cref{fi:sp-non-minimal-odd-G}.
\begin{figure}[htb]
    \centering
    \includegraphics[page=11, width=0.6\columnwidth]{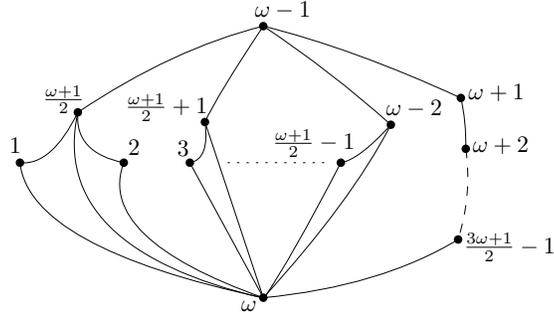}
    \caption{Illustration for \cref{th:sp-non-minimal}. Case $\omega>8$ odd. The leftmost parallel has a double flag.}
    \label{fi:sp-non-minimal-odd-G}
\end{figure}
In any planar embedding of $G_\omega$, at most three vertices of the sequence $v_1,\dots, v_{\frac{\omega+1}{2}-1}$ can be inside the same cycle $v_\omega, v_i, v_{i+1}, v_{\omega-1}$ ($i=\frac{\omega+1}{2}, \dots, \omega-2$), due to the presence of the double flag. Note that at most one face can be shared by more than two vertices of the sequence. If $k = \frac{\omega+1}{2}-5$, in $G_\omega$ there is a number of free points (where no vertex is drawn) equal to the number of flag components minus $4$. Let $f$ be the face shared by two vertices of the sequence $v_1,\dots, v_{\frac{\omega}{2}-1}$, say $v_2$ and $v_3$, one of which is part of the double flag containing also $v_1$, and assume that all the $k$ free points are inside $f$ (see, e.g., \cref{fi:non-minimal-odd-failure-omega}). For $G_{\omega+1}, \dots, G_{\frac{3}{2}\omega-4}$ all the vertices of the path can be drawn using the $k$ extra points, as they are incident to the same face. Note that $G_{\omega+1}$ does not contain vertex $v_1$, $G_{\omega+2}$ does not contain vertex $v_2$, and $G_{\omega+3}$ does not contain vertex $v_3$, thus the face $f$ has been merged with the faces of the flags comprehending $v_1$, $v_2$, and $v_3$, and this new face $f'$ contains the three points to which $v_1$, $v_2$, and $v_3$ were mapped. In $G_{\frac{3\omega +1}{2}-1}$, vertex $v_{\frac{3\omega +1}{2}-1}$ should be drawn in $f'$ so to maintain planarity, but all the points in $f'$ have been used to draw the $\frac{\omega+1}{2}-2$ vertices of the path and the $k$ free points are inside other faces (at most two are in the same face). Thus $(G,\omega, k,\tau)$ is not realizable for $k = \frac{\omega+1}{2}-5 < \lfloor\frac{\omega}{2}\rfloor-3$ (see, e.g., \cref{fi:non-minimal-odd-failure-omega+i}). Note that with one more point~on~$f'$, i.e.,  $k = \frac{\omega+1}{2}-4 = \lfloor\frac{\omega}{2}\rfloor-3$, $\cal S$ is realizable; see  \cref{fi:non-minimal-odd-success-omega,fi:non-minimal-odd-success-omega+i}.
\begin{figure}[htb]
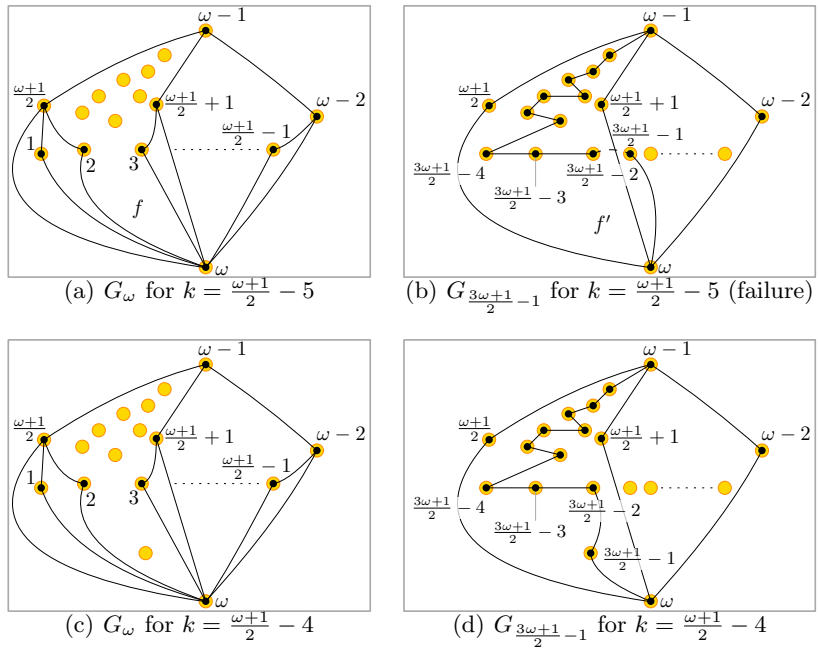

    \centering
    \begin{tabular}{c c}
         \subfigure[$G_\omega$ for $k=\frac{\omega+1}{2}-5$]{\label{fi:non-minimal-odd-failure-omega}\includegraphics[page=12, width=0.45\columnwidth]{non-minimal2.pdf}}
         &
         \subfigure[$G_{\frac{3\omega+1}{2}-1}$ for $k=\frac{\omega+1}{2}-5$ (failure)]{\label{fi:non-minimal-odd-failure-omega+i}\includegraphics[page=13, width=0.45\columnwidth]{non-minimal2.pdf}}
         \\
         \subfigure[$G_\omega$ for $k=\frac{\omega+1}{2}-4$]{\label{fi:non-minimal-odd-success-omega}\includegraphics[page=14, width=0.45\columnwidth]{non-minimal2.pdf}}
         &
         \subfigure[$G_{\frac{3\omega+1}{2}-1}$ for $k=\frac{\omega+1}{2}-4$]{\label{fi:non-minimal-odd-success-omega+i}\includegraphics[page=15, width=0.45\columnwidth]{non-minimal2.pdf}}
    \end{tabular}
    \caption{Illustration for \cref{th:sp-non-minimal}. Case $\omega>8$ odd. Drawings of $G_\omega$ and~$G_{\frac{3\omega+1}{2}-1}$.}
    \label{fi:non-minimal-odd}
\end{figure}
\end{appendixproof}

\section{Final Remarks and Open Problems}
We conclude with some open research directions.
\begin{inparaenum}[$(i)$]
    \item \cref{th:hardness-non-minimal} implies that the realizability testing of graph stories is para\NP-hard when parameterized by~$k$. On the other hand, \cref{th:fpt-non-minimal} proves that the problem is in \FPT\ when parameterized by $\omega+k$. For non-minimal graph stories, it remains open to establish the complexity of the realizability problem when parameterized by~$\omega$ alone.
    %
    %
    \item About minimal graph stories, we showed that for $\omega \geq 5$ there are stories of series-parallel graphs that are not realizable. For $k=1$, the smaller $\omega$ for which we have a non-realizable story of a series-parallel graph is $10$. What about the realizability of series-parallel graphs for $k =1$ and $5 \leq \omega \leq 9$?
    \item Finally, is any (minimal) graph story $h$-reroute realizable for $h$ being a constant or a sublinear function of $\omega$? 
\end{inparaenum}


\bibliographystyle{splncs04}
\bibliography{biblio}

\end{document}